\newcommand*{\F}{\mathcal{F}}
\newcommand{\R}{\ensuremath{\mathbb{R}}}
\newcommand{\IP}{\ensuremath{\mathbb{P}}}
\newcommand*{\ud}{\mathrm{d}}
\newcommand*{\rom}[1]{\expandafter\@slowromancap\romannumeral #1@}
\DeclareMathOperator*{\essinf}{ess\,inf}
\tikzset{
    vertex/.style = {
        circle,
        fill            = black,
        outer sep = 2pt,
        inner sep = 1pt,
    }
}
 \tikzstyle{mybox} = [draw=blue!70, fill=blue!10, thin,
\tikzstyle{fancytitle} =[fill=green!20, text=blue!100]
\theoremstyle{plain} \newtheorem{thm}{Theorem}[section]
\theoremstyle{plain} \newtheorem{lem}[thm]{Lemma}
\theoremstyle{plain} \newtheorem{coll}[thm]{Corollary}
\theoremstyle{plain} \newtheorem{defi}{Definition}[section]
\theoremstyle{plain} 
\theoremstyle{plain} 
\theoremstyle{plain} 
\theoremstyle{plain} 
\theoremstyle{plain} \newtheorem{prop}{Proposition}[section]
\theoremstyle{plain} 
\theoremstyle{plain} \newtheorem{Problem}{Problem}[section]
\newcommand\Nchapter[1]{%
  \if@mainmatter%
    \@mainmatterfalse%
    \chapter{#1}%
    \@mainmattertrue%
  \else
    \chapter*{#1}%
  \fi}
\renewcommand{\baselinestretch}{1.5}
\begin{document}
\title{Horizon-unbiased Investment with Ambiguity\footnote{
We are grateful for the funding    from the NSF of China (11501425 and 71801226).  }}
 \author{Qian Lin\thanks{ Email: linqian@whu.edu.cn}}
\affil{School of Economics and Management,   Wuhan University, China}
\author{Xianming Sun\thanks{ Email: Xianming.Sun@zuel.edu.cn}}
\affil{School of Finance, Zhongnan University of Economics and Law, China}
\author{Chao Zhou\thanks{ Email: matzc@nus.edu.sg}}
\affil{Department of Mathematics, National University of Singapore, Singapore }
\maketitle
\begin{abstract}
In the  presence of ambiguity on the driving force of  market randomness, we consider the   dynamic portfolio choice  without any predetermined  investment horizon.
The investment criteria is formulated as a robust forward  performance process, reflecting an investor's dynamic preference.
We show that   the market risk premium and the utility risk premium jointly    determine  the investors' trading direction and the  worst-case scenarios of the risky asset's mean return and   volatility.
The closed-form formulas  for the optimal investment strategies are given in the special settings of the CRRA preference.

\end{abstract}

\textbf{Keywords}: Ambiguity, Forward Performance, Robust Investment,
Risk Premium
\section{Introduction}

Dynamic portfolio choice problems usually envisage an investment setting where an investor  is exogenously assigned  an investment performance criteria and stochastic models for the price processes of risky assets.
However, the investor may extemporaneously change the investment horizon,  consistently update her preference   with the market evolution, and conservatively invest due to  ambiguity on the driving force of   market randomness or the dynamics     of the risky assets.
Motivated by these  investment realities,
we study a robust horizon-unbiased portfolio problem in a continuous-time framework.


In the seminal work of \cite{Merton1969}, continuous-time portfolio choice   is formulated as a stochastic control problem   to maximize the expected utility at a specific investment horizon by searching for the optimal strategy in an admissible strategy space.
Note that if the investor has two candidate investment horizon $T_1$, $T_2$, $(T_2>T_1>0)$, the resulting optimal strategies associated with these two horizons are generally not consistent over the common time interval $[0,T]$, $(T\le T_1<T_2)$  \citep{Musiela2007}.
Hence, Merton's framework is neither  suitable for the case where    an investor may extend or shorten   her initial investment horizon, nor  the case where the investor may update   her preference  in accordance to the accumulated market information.
In these quite realistic settings, the investor needs an optimal strategy which is independent of the investment horizon and reflects her dynamic preference in time and wealth.
The horizon-unbiased utility or  forward performance measure, independently proposed by \cite{Choulli2007,Henderson2007,Musiela2007}, provides a portfolio framework   satisfying the aforementioned requirements.
In such framework, an investor specifies   initial preferences (utility function), and then propagates them \emph{forward} as the financial market evolves.
This striking characteristic contrasts the  portfolio choice based on the forward performance measure from that in   Merton's framework,   in which
intertemporal  preference is   derived  from the terminal utility function in a \emph{backward} way.
\cite{Musiela2010a} specify the  generic forward performance measure  as a stochastic flow $U=U(t,x)_{t\ge0}$,  taking time $(t)$ and wealth $(x)$ as arguments.
The randomness of the forward performance measure is driven by the   Brownian motion which is the same as the driving force of the randomness of asset price.
It implies that the driving force  of market randomness is  simultaneously embedded into the investor's preference and the risky asset price process.
Such modeling approach   implicitly assumes that
the Brownian motion represents the essential source of risk behind the financial market and the risky assets.
 Especially,
the volatility of a forward performance measure reflects the investor's uncertainty about her future preference due to the randomness of the financial market states.
However, due to the  epistemic limitation or limited
information, an investor may have ambiguity about the driving force of market randomness and her future preference.
Focusing on such ambiguities, we will  introduce  a robust   forward performance measure, and investigate the corresponding portfolio selection problems.

The mean return rate and volatility are important factors characterizing the dynamics of risky assets.
In the traditional portfolio theory, these two factors are usually modeled by stochastic processes, the distributions of which are known to the decision-maker at each time node before the specified investment horizon.
In this case,
the investor is actually assumed to have full information on the   driving force of market randomness, and can
accurately  assigns probabilities to the various possible outcomes of investment or factors associated with the investment.
However,
in so complicated financial markets,
it is \emph{unrealistic} for investors to have accurate information on   the   dynamics or distributions of  the risk factors, essentially due to the cognitive limitation  on the  driving force of market randomness.
This situation  is referred to as  ``ambiguity" in the sense of Knight, while ``risk" in the former situation.
Ambiguity has raised  researchers' attention in the area of asset pricing   and  portfolio management \citep[see e.g.][]{Maenhout2004,Garlappi2007,wang2009optimal,Bossaerts2010,Liu2011a,Chen2014i,Luo2014,Guidolin2016,luo2016robustly,zeng2018ambiguity,Escobar2018}.
 We assume that an investor has ambiguous  beliefs on the paths of the risky asset price.
 Ambiguous beliefs are characterized by a set $\mathcal{P}$ of probability measures $(\mathbb{P}\in \mathcal{P})$ defined on the   canonical space $\Omega$, the set of continuous paths starting from the current price of the risky asset.
 We incorporate the investor's ambiguity on the risky asset price into her preference, by defining the forward performance measure on the canonical   space $\Omega$.

 We first characterize ambiguity on the dynamics of risky asset in terms of ambiguity on its mean return and volatility.
 More specifically, we assume that the mean return and the volatility processes of the risky asset lie in a convex compact set $\Theta\subset \mathbb{R}^2_+$, which then leads  to the set of probability measures $\mathcal{P}$.
%
  This formulation is different from the stochastic models with the known distributions at each time node, and generalizes the framework defined on a probability space with only one probability measure.
 Within in this general setting, we investigate an ambiguity-averse investor's  investment strategy,
  and her  conservative beliefs on the mean return and the volatility of risky assets.

 We then define the robust forward performance measure, by taking the investor's ambiguity on the deriving force of market randomness.
 In turn, we propose a method to construct such robust forward performance measure for a given initial preference, and derive the corresponding investment strategy and conservative beliefs on the mean return and the volatility of risky assets.
 We show that the sum of the market risk premium and the utility risk premium determines the trading direction.
  We further specify the initial preference of the constant relative risk aversion (CRRA) type, and investigate the determinants of the conservative beliefs on the mean return and the volatility of risky assets in three settings, i.e., ambiguity on the mean return rate, ambiguity on the volatility, and the structured ambiguity.
  When we consider ambiguity on the mean return rate, we keep the volatility as a constant, and  vise versa.
  Such ambiguities have been investigated in Merton's framework \citep[see e.g.][]{Lin2014c,luo2016robustly}.
  The third setting is motivated by the fact that there is no consensus on the relation between the mean return and the volatility of risky assets in  the empirical literature \citep[see e.g.][]{Omori2007,Bandi2012,Yu2012}, and investigated by \cite{Epstein2013}.
  We show that
  the sign of the total risk  premium determines the conservative belief on the mean return in the first setting, while the risk attitude and the relative value of the market risk premium over the utility risk premium jointly determine   the conservative belief on the volatility in the second setting.
  In the third setting,
  we would not derive the closed-form formula for the conservative beliefs, but show that the corresponding beliefs can take some intermediate value within the candidate value interval, as well as the upper and lower bounds.
  To our knowledge,
such interesting results are new   in the portfolio selection literature.

This paper contributes to the existing  literature in three folds.
\emph{First}, we propose a generic formulation of robust forward performance accommodating an investor's ambiguity on the dynamics of risky assets.
\emph{Second}, we figure out the determinants of trading direction for an investor in a market with one risk-free asset and one risky asset.
From the economic point of view,
it is the sum of the market risk premium and the utility risk premium  that determines an investor's trading direction.
\emph{Third}, we show that the market risk premium, the utility risk premium, and the risk tolerance affect an investor's conservative belief  on the mean return and volatility.
Especially, if the maximum of the total risk premium is negative, an investor will take the maximum of the mean return as the
worst-case value; if the minimum of the total risk premium is positive, an investor will take the minimum of the mean return as the value in the worst-case scenario; otherwise, the worst-case mean return lies between its minimum and maximum.
The market risk premium, the utility risk premium, and the risk tolerance jointly determine an investor's conservative belief on the volatility of risky assets.
We emphasize that
the conservative belief is related to the optimization associated with risk premiums, and these conservative beliefs may be   some intermediate values within their candidate value intervals, as well as boundaries.

\textbf{Related Literature}.
Most of the existing results on forward performance measures have so far focused on its construction and portfolio problems in the setting of risk, rather than    ambiguity  \cite[][to name a few]{Zariphopoulou2010,Musiela2010,Alghalith2012,Karoui2013,Kohlmann2013,Anthropelos2014,Nadtochiy2017,Avanesyan2018,Shkolnikov2015a,Case2018}.
As one of the few exceptions,
\cite{Kallblad2013a} investigate the robust forward performance measure in the setting of ambiguity characterized by a  set of equivalent probability measures.
However, this approach fails to solve the robust ``forward" investment problem under ambiguous volatility, since volatility ambiguity is characterized by a set of mutually singular probability measures \citep{Epstein2013}.
We fill this gap by characterizing an investor's ambiguity  with a set of probability measures, which may not be equivalent with each other.
Similar to our work, \cite{Chong2018} investigate robust forward investment under parameter uncertainty
in the framework where  a unique probability measure is aligned to the canonical space $(\Omega)$.
Different from such model setup,
we align a set of probability measures on the canonical space $(\Omega)$, accounting for an investor's ambiguity on the future scenarios of the risky asset price.
This approach is not only  technically  more general than the approach with a set of dynamic models under  a unique probability measure (as detailed in Remark 4 by \cite{Epstein2013}), but also allows an investor to explicitly incorporate  ambiguity on the risk source into her preference.
That is the key difference between our framework and the framework of
\cite{Chong2018}.
On the other hand, \cite{Chong2018} construct the forward performance measure  based on the solution of an infinite horizon  backward stochastic differential equation (BSDE).
Our approach associates the forward performance measure with a stochastic partial differential equation (SPDE), which provides the analogue of  the  Hamilton-Jacobi-Bellman equation (HJB) in   Merton's framework.
For the reason of tractability, we limit ourself to forward performance measures of some special forms, and investigate the corresponding robust investment.
It is out of this paper's scope to investigate the existence, uniqueness, and regularity    of the solution of  the associated SPDE in the general setting.
 Such simplified model setup and the corresponding results shed  light  on how ambiguity-aversion investors dynamically revise  their preferences as the market involves.

The remainder of this paper is organized as follows.
Section \ref{Setup} introduces the model setup for robust forward investment.
The construction of the robust forward performance measure is investigated in Section \ref{Construction}.
In Section \ref{CRRA:case}, we study the conservative belief of an ambiguity-averse investor with preference of the constant relative risk aversion (CRRA) type.
Section \ref{Conclusion} concludes.

\section{Model setup}
\label{Setup}

We consider a financial market with two tradable assets:  the risk-free bond  and the risky asset.
The risk-free bond has a constant return rate $r$, i.e.,
\begin{equation}\label{RisklessBondPrice}
  \ud P_t=r P_t \ud t\,,
\end{equation}
where $P$ is bond price with $P_0=1$.

The risk asset price $S = (S_t)_{t\in[0,\infty)} $  is modelled by the canonical process of   $\Omega$, defined by
$$ \Omega  = \left\{ \omega  = {(\omega (t))_{t \in [0,\infty )}} \in C([0,\infty ),\mathbb{R}^+):\omega (0) = S_0\right\}, $$
where $S_0$ is the current price of the risky asset and $S_t(\omega)=\omega(t)$.
We equip $\Omega$
 with the uniform norm and the corresponding Borel $\sigma$-field $\mathcal{F}$.
  $\mathbb{F} =  (\mathcal{F}_t)_{t\in[0,\infty)} $ denotes the canonical filtration, i.e., the natural (raw) filtration generated by $S$.
Due to the complication of financial market and the  limitation of individual    cognitive ability,
an investor may have  ambiguous belief on the risky asset price, i.e., ambiguity on the mean return $(\mu) $ or volatility $(\sigma)$ in our model setup.
We assume that $(\mu_t,\sigma_t)$ can take any value within  a convex compact set $\Theta\subset\mathbb R^{2}_+$, but without additional information about their distributions   for any  time $t\in[0,\infty)$.
That is,   $\Theta$ represents  ambiguity on the return and volatility of the risky asset.
More explicitly, we characterize ambiguity by $\Gamma^\Theta$, defined by
 \begin{equation}\label{processGama}
   \Gamma^\Theta=\Big\{\theta \mid \theta=(\mu_t ,\sigma_t )_{t\ge 0} \mbox{ is an }   \mathcal
{F}\mbox{-progressively measurable process and } \ (\mu_t,\sigma_t)\in \Theta \mbox{ for any } t>0 \Big\}\,.
 \end{equation}
For   $\theta=(\mu_t,\sigma_t)_{t \ge 0}\in \Gamma^\Theta$,
let $\mathbb{P}^\theta$ be the probability measure on $(\Omega,\mathbb{F})$ such that
the following stochastic differential equation (SDE)
 \begin{equation}\label{SDE1}
   \ud S_t=S_t(\mu_t\ud t+\sigma_t\ud W_t^\theta)\,,
\end{equation}
admits a unique  strong solution $S=(S_t)_{t\ge0}$,
where   $W^\mathbb{\theta}=(W^\mathbb{\theta}_t)_{t\ge 0}$ is a Brownian motion under $\mathbb{P}^\theta$.
   Let $\mathcal{P}^\Theta$  denote the set of   probabilities $\mathbb{P}^{\theta}$ on $(\Omega,\mathbb{F})$ such that the SDE \eqref{SDE1} has a unique strong solution, corresponding to the ambiguity characteristic  $\Theta$ $(\theta\in \Gamma^\Theta)$.
The Brownian motion $W^\theta$ can be interpreted as the driving force of   randomness behind the risky asset under the probability measure $\mathbb{P}^\theta$.
Such         model setup allows us to analyze how the investor's belief on the risky asset affects her preference and investment strategy, especially the effect of ambiguity on the risk source.

%
%
%
%
%
%
%

An investor is  endowed with some wealth $x_0>0$ at time $t=0$,  and allocates her wealth dynamically between the risky asset and the risk-free bond.
For $t\ge0$ and $s\ge t$, let  $\pi_s $ be the proportion of her wealth invested in the stock at time $s\ge 0$.
Due to the self-financing property, the discounted wealth $X^\pi=(X_s^\pi)_{s\ge t}$ is given by
\begin{equation}\label{Wealth}
  \ud X_s^\pi=(\mu_s-r)\pi_s X_s^\pi\ud s+\pi_s X_s^\pi \sigma_s\ud W_s^\mathbb{\theta},\quad X_t^\pi=x\,,
\end{equation}
where $r$ is the risk-free interest rate, $ W^\theta$ is a Brownian motion under $\mathbb{P}^\theta\in \mathcal{P}^\Theta$.
 The set of admissible strategies
 $\mathcal{A}(x)$
 is defined by
 \[\mathcal{A}(x)=\left\{\pi \left|  \pi  \mbox{ is self-financing and }  \mathbb{F}\mbox{-adapted}, \int_{t}^{s}\pi_r^2\ud r <\infty,  s\geq t, \;\right.\mathbb{P}^\theta\mbox{-a.s., for all}\; \mathbb{P}^\theta\in   \mathcal{P}^\Theta\right\}\,.\]

The optimal investment strategy $\pi^*$ and the corresponding wealth process $X^*$ are usually associated with an optimization problem, such as utility maximization or risk minimization.
%
Within Merton's framework for portfolio theory  \citep{Merton1969},
the value process $U(t,x;\tilde T)$ is formulated as
\begin{equation}\label{classical:U}
  U(t ,x;\tilde T):=\sup_{\pi\in \mathcal{A}_{\tilde T}}E[u(X_{\tilde T}^{\pi})\mid\F_t,\,X_t^\pi=x]\,,
\end{equation}
where the investment horizon $\tilde T$ is predetermined, $u$ is a utility function,
$\mathcal{A}_{\tilde T}$ is the set of admissible strategy, and $X_{\tilde T}^\pi$ is the terminal wealth corresponding to an admissible strategy $\pi\in \mathcal{A}_{\tilde T}$.
The expectation $(E)$ in \eqref{classical:U} is taken under some probability measure $\IP$\,,
if there is no ambiguity on the deriving force of market randomness.
Then, the dynamic programming principle can be applied to solve the optimal control problem \eqref{classical:U}, namely,
 \begin{equation}\label{DPP}
   U(t,x;\tilde  T)=\sup_{\pi\in \mathcal{A}_{\tilde T}}E[U(s,X_s^{\pi};\tilde T)\mid\F_t,\,X_t^\pi=x]\,.
 \end{equation}
By verification arguments, $U$ is the solution of  the Hamilton-Jacobi-Bellman (HJB) equation  \citep{Zhou1997}.
 The dynamic programming equation \eqref{DPP} essentially signifies that $\{U(t,X_t^{\pi};\tilde{T})\}_{t\in[0,\tilde T]}$ is a martingale at the optimum, and a supermartingale otherwise, associated with some probability measure $\IP$\,.
 This property can be interpreted as follows: if the system is currently at an optimal state, one needs to seek for controls which preserve   the same level of the average performance over all future times before the predetermined investment horizon $\tilde T$.
 We refer to this property as the martingale property of the value process.
On the other hand, \eqref{DPP} hints that $U(\tilde T,X_{\tilde T}^{\pi};\tilde{T})$ coincides with $u(X_{\tilde T}^{\pi})$, where $u$ represents the preference at $t=\tilde T$.
Note that the future utility function $u$  is specified at $t=0$.
However, it is not    intuitive   to specify the \emph{future} preference at the \emph{initial} time with complete isolation from the evolution of the market.
\cite{Musiela2007,Musiela2008} propose the so-called forward performance measure $U(t,x)$ which keeps the martingale property of $\{U(t,X_t^{\pi})\}_{t\in[0,T]}$ for \emph{any} horizon $T>0$, and coincides with the initial preference, namely $U(0,X_0^{\pi})=u(X_0^{\pi})$.
In this framework, the future preference  dynamically changes in accordance with the market evolution.

In the similar spirit of  \cite{Musiela2007,Musiela2008},
we  will generalize the definition of forward performance measure by considering  ambiguity on the risk source.
For $\mathbb{P}^\theta\in \mathcal{P}^\Theta$, we define  $\mathcal{P}^\Theta(t,\mathbb{P}^\theta)$
\begin{equation}\label{Con:P}
  \mathcal{P}^\Theta(t,\mathbb{P}^\theta):=\{\mathbb{P}'\in \mathcal{P}^\Theta\mid \mathbb{P}'=\mathbb{P}^\theta\mbox{ on } \mathcal{F}_t\}
\end{equation}
 which   facilitates the definition of the robust forward performance measure.

\begin{defi}[Robust forward performance]\label{RFP}

An $\F_t$-progressively measurable process $(U(t,x))_{t\ge0}$ is called a robust forward performance  if for $t\ge0$ and $x\in\R^+$, the following holds.
\begin{enumerate}[(i)]
  \item  The mapping $x\rightarrow U(t,x)$ is strictly concave and increasing.
  \item  For each $\pi\in \mathcal{A}(x)$,
      $\essinf_{\mathbb{P}\in \mathcal{P}^\Theta}\mathbb{E}^\mathbb{P} [U(t,X_t^\pi)]^+<\infty$, and
      \[\essinf_{\mathbb{P}'\in \mathcal{P}^\Theta(t,\mathbb{P}^\theta)}\mathbb{E}^\mathbb{P' } [U(s,X_s^\pi)\mid \F_t]\le U(t,X_t^\pi),\ t\le s, \; \mathbb{P}^\theta\mbox{-a.s.}\]

      \item
      There exists $\pi^*\in\mathcal{A}(x) $ for which
      \[  {\essinf_{\mathbb{P'}\in \mathcal{P}^\Theta(t,\mathbb{P}^\theta)}\mathbb{E}^\mathbb{P'} [U(s,X_s^{\pi^*})\mid \F_t]= U(t,X_t^{\pi^*}),\ t\le s,}\;\mathbb{P}^\theta\mbox{-a.s.}\]

\end{enumerate}
\end{defi}

Given the dynamics of the  forward performance measure $(U(t,x))_{t\ge0}$, we will solve the problem for optimal investment strategy, which can be formulated as a similar  problem as \eqref{DPP}.
\begin{Problem}[Robust Investment Problem]\label{RobustOpt}
  Given the robust forward performance $(U(t,x))_{t\ge0}$, the   investment problem is to solve
  \begin{equation}
   U(t,x) = \mathop {\sup }\limits_{\pi  \in \mathcal{A}(x)} \mathop {\inf }\limits_{\mathbb{P'}\in \mathcal{P}^\Theta(t,\mathbb{P}^\theta)} {\mathbb{E}^\mathbb{P'}}[U(s,X_s^\pi )\mid \mathcal{F}_t, {X_t^\pi} = x], \quad\mathbb{P}^\theta\mbox{-a.s.}\,,
 \end{equation}
 where $X^\pi$ follows \eqref{Wealth} and $\mathcal{P}^\Theta(t,\mathbb{P}^\theta)$ is given in \eqref{Con:P}.
\end{Problem}

 %
%

 The solution of this problem provides the robust investment strategy
$\pi^*$ and  the worst-case scenario  of $(\mu^{\pi^*},\sigma^{\pi^*})$ under ambiguity.
In turn, they will implicitly  provide
the corresponding probability measure  $\mathbb{P}^{\theta^*}$.
In the next section,
we will introduce the construction methods for the forward  performance under ambiguity, and then solve the robust investment problem.

\section{Robust Investment under the  Forward Performance Measures}
\label{Construction}


The specification of a forward performance measure $(\bar U(t,x))_{t\ge0}$ can take the market state and investor's wealth level into account at time $t$. Mathematically, $(\bar U(t,x))_{t\ge0}$  is called stochastic flow, a stochastic process with space parameter.
It can be characterized by its drift random field and diffusion random field. Under certain regularity hypotheses \citep{Karoui2013}, it can be written in the integration  form
 \begin{equation}\label{FP}
    \bar U(t,x)=u(x)+\int_0^t\beta(s,x)\ud s+\int_0^t\gamma(s,x)\ud \bar{B}_s \,,
 \end{equation}
 where $\bar{B}$ is the standard  Brownian motion defined on some probability space, $u$ is the initial utility, and $\beta$ and $\gamma$ are the so-called drift random field and the diffusion random field, respectively.
 To guarantee a stochastic flow $(\bar U(t,x))_{t\ge0}$ satisfy the definition \ref{RFP}, its drift random filed $\beta$ and diffusion random field $\gamma$ should satisfy some structure.
 By exploring such structure,
\cite{Musiela2010a} constructed some examples of forward performance measures.
In this framework, the driving force of market randomness is modelled by the standard Brownian motion $\bar {B}$.
We will generalize such framework, and  account for the ambiguity on the driving force of market randomness or the risky asset price.
Different from the dynamics of the risky asset price \eqref{SDE1},
we give even more freedom to an investor's preference,
and  propose the robust forward performance measure of the following form,
\begin{equation}\label{U:flow}
 U(t,x) = u(x) + \int_0^t \left[{\beta (s,x)}  +  \delta(s,x)\mu_s+ \gamma (s,x)\sigma_s^2 \right]\ud{s}  + \int_0^t {\eta (s,x)\sigma_s\ud{W_s^\theta}} ,
\end{equation}
where $W^\theta$ a Brownian motion   under
$\mathbb{P}^\theta\in \mathcal{P}^\Theta$ and $\theta=(\mu_t ,\sigma_t )_{t\ge 0} \in \Gamma^\Theta$\,.
 The random field $(\beta,\delta,\gamma,\eta)$ characterizes an investor's attitudes toward wealth level, ambiguity, and market risk.
Especially, the volatility term $\eta(t,x)\sigma_t$ of the robust forward performance measure reflects the investor's ambiguity about her preferences in the future, and   is subject to her choice.
The BSDE-based approach proposed by
\cite{Chong2018}  captures the investor's concern on parameter uncertainty by the generator of the associated BSDE.
Different from this BSDE-based approach,
we explicitly embed such concern into the axiomatic formulation \eqref{U:flow}.

For any given robust forward preference of the form   \eqref{U:flow},  the   investment problem \ref{RobustOpt}   allows  an investor to maximize her utility under the worst-case scenario of $ (\mu_t ,\sigma_t )_{t\ge 0} \in \Gamma^\Theta$.
To make the investment problem tractable,
the forward performance measure is assumed to be  regular enough.
For this reason, we introduce the notation of $\mathcal{L}^{2}(\mathbb{P})$-smooth stochastic flow.

 \begin{defi}[$\mathcal{L}^{2}(\mathbb{P})$-Smooth Stochastic Flow]
    Let $F: \Omega\times[0,\infty)\times\R\rightarrow\R$ be a stochastic flow with spatial argument $x$ and   local characteristics $(\beta,\gamma)$, i.e.,
 \[F(t,x) = F(0,x) + \int_0^t {\beta(s,x)\ud s}   + \int_0^t {\gamma (s,x)\ud{B^\mathbb{P}_s}} ,\]
 where $B^\mathbb{P}$ is a Brownian motion defined on a
   filtered probability space $(\Omega,(\mathcal
{F}_{t})_{t\geq 0},\mathbb{P})$.
  $F$ is said to be $\mathcal{L}^{2}(\mathbb{P})$-smooth  or  belong to  $\mathcal{L}^{2}(\Omega,(\mathcal
{F}_{t})_{t\geq 0},\mathbb{P})$, if
  \begin{enumerate}
    \item [(i)] for $x\in \mathbb{R}$, $F(\cdot,x)$ is continuous; for each $t>0$, $F(t,\cdot)$ is a $C^2$-map from $\R$ to $\R$, $\mathbb{P}$-a.s.,
    \item  [(ii)]$\beta:\Omega\times[0,\infty)\times\R\rightarrow\R $
    and $\gamma:\Omega\times[0,\infty)\times\R\rightarrow\R^d $ are continuous process continuous in $(t,x)$ such that 
     \begin{itemize}
      \item [(a)] for each $t>0$, $\beta(t,\cdot),\gamma(t,\cdot)$ belong  to $C^1(\R)$, $\mathbb{P}$-a.s.;
      \item [(b)] for each $x\in\R$, $\beta(\cdot,x)$ and $\gamma(\cdot,x) $ are $\mathcal{F}$-adapted.
    \end{itemize}
  \end{enumerate}
 \end{defi}

For  $\mathbb{P^\theta}\in \mathcal{P}^\Theta$, we are ready to formulate the robust forward performance as a $\mathcal{L}^2(\mathbb{P^\theta})$-smooth stochastic flow
%
  \begin{equation}\label{RU}
 U(t,x) = u(x) + \int_0^t \left[{\beta (s,x)}  +  \delta(s,x)\mu_s+ \gamma (s,x)\sigma_s^2 \right]\ud{s}  + \int_0^t {\eta (s,x)\sigma_s\ud{W_s^\theta}} ,
\end{equation}
where $W^\theta$ a Brownian motion   under
$\mathbb{P}^\theta\in \mathcal{P}^\Theta$ and $\theta=(\mu_t ,\sigma_t )_{t\ge 0} \in \Gamma^\Theta$\,.
Its smoothness plays a key role
 to construct the robust forward performance measures by specifying the structure of $(\beta,\delta,\gamma,\eta)$.

\begin{lem}\label{AssumptionThm}
For $\mathbb{P^\theta}\in \mathcal{P}^\Theta$, let $U$ be a  $\mathcal{L}^2(\mathbb{P^\theta})$-smooth stochastic flow as defined in (\ref{RU}).
   Let us suppose that

  \begin{enumerate}
  \item[(i)] the mapping $x\rightarrow U(t,x)$ is strictly concave and increasing;
    \item [(ii)]for an arbitrary  $\pi\in \mathcal{A}(x)$,  there exists $(\mu^\pi,\sigma^\pi)\in \Gamma^\Theta$, such that
  \[\essinf_{ \mathbb{P}'\in \mathcal {P}(t,\mathbb{P}^\theta)}\mathbb{E}^\mathbb{P'}[U(s,X_s^{\pi})\mid \F_t]= \mathbb{E}^{\mathbb{P}^{\mu^\pi,\sigma^\pi}} [U(s,X_s^{\pi})\mid \F_t], \;  t\le s, \mathbb{P^{\theta}}\mbox{-}a.s.\,,\]
  and
  $Y_s=U(s,X_s^{\pi})$ is a $\mathbb{P}^{\mu^\pi,\sigma^\pi}$-supermartingale;
    \item [(iii)] there exists $\pi^*\in \mathcal{A}(x)$ such that
    $Y_s^*=U(s,X_s^{\pi*})$ is a $\mathbb{P}^{ \mu^{\pi^*},\sigma^{\pi^*}}$-martingale.

  \end{enumerate}
Then $\pi^*$ is the optimal investment strategy for Problem \ref{RobustOpt}, associated with the worst-case scenario $(\mu^{\pi^*},\sigma^{\pi^*})$ of  $(\mu ,\sigma )$ .
\end{lem}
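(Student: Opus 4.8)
The plan is to argue by verification, in the same spirit as the martingale-property characterisation of the value process recalled around \eqref{DPP}, but adapted to the $\sup$--$\inf$ structure of Problem \ref{RobustOpt}. Concretely, I would establish two inequalities: an upper bound showing that for \emph{every} admissible strategy the robust (worst-case) conditional payoff is dominated by $U(t,x)$, and a matching lower bound showing that this bound is attained at $\pi^*$. Together these give
\[
U(t,x)=\sup_{\pi\in\mathcal A(x)}\essinf_{\mathbb P'\in\mathcal P^\Theta(t,\mathbb P^\theta)}\mathbb E^{\mathbb P'}[U(s,X_s^\pi)\mid\F_t,\,X_t^\pi=x]
\]
with the supremum realised by $\pi^*$, which is exactly the assertion.

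For the upper bound, fix an arbitrary $\pi\in\mathcal A(x)$. By hypothesis (ii) the essential infimum over $\mathcal P^\Theta(t,\mathbb P^\theta)$ is attained at the measure $\mathbb P^{\mu^\pi,\sigma^\pi}$, so that
\[
\essinf_{\mathbb P'\in\mathcal P^\Theta(t,\mathbb P^\theta)}\mathbb E^{\mathbb P'}[U(s,X_s^\pi)\mid\F_t]=\mathbb E^{\mathbb P^{\mu^\pi,\sigma^\pi}}[U(s,X_s^\pi)\mid\F_t].
\]
Since the same hypothesis guarantees that $Y_s=U(s,X_s^\pi)$ is a $\mathbb P^{\mu^\pi,\sigma^\pi}$-supermartingale, the supermartingale inequality on $[t,s]$ yields $\mathbb E^{\mathbb P^{\mu^\pi,\sigma^\pi}}[Y_s\mid\F_t]\le Y_t=U(t,X_t^\pi)=U(t,x)$, $\mathbb P^\theta$-a.s. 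Chaining the displayed equality with this inequality gives that the robust conditional payoff is $\le U(t,x)$; taking the supremum over $\pi$ preserves the inequality.

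For the lower bound I would specialise to $\pi^*$. Applying (ii) to $\pi^*$ the essinf is again attained, now at $\mathbb P^{\mu^{\pi^*},\sigma^{\pi^*}}$, while hypothesis (iii) upgrades the supermartingale to a genuine $\mathbb P^{\mu^{\pi^*},\sigma^{\pi^*}}$-martingale; hence the inequality above becomes an equality, $\mathbb E^{\mathbb P^{\mu^{\pi^*},\sigma^{\pi^*}}}[Y_s^*\mid\F_t]=U(t,x)$. Therefore the robust conditional payoff of $\pi^*$ equals $U(t,x)$, which both dominates the supremum from below and is dominated by it via the upper bound, forcing equality and identifying $\pi^*$, together with the associated worst-case dynamics $(\mu^{\pi^*},\sigma^{\pi^*})$, as optimal.

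I expect the only genuine subtlety --- and the main obstacle to a fully rigorous write-up --- to be measure-theoretic rather than conceptual: one must ensure that the pointwise $\inf$ appearing in Problem \ref{RobustOpt} may legitimately be replaced by the $\essinf$ over the restricted family $\mathcal P^\Theta(t,\mathbb P^\theta)$, and that all the ``$\mathbb P^\theta$-a.s.'' conditional statements are mutually consistent even though $\mathcal P^\Theta$ contains \emph{mutually singular} measures arising from volatility ambiguity. Because every $\mathbb P'\in\mathcal P^\Theta(t,\mathbb P^\theta)$ agrees with $\mathbb P^\theta$ on $\F_t$, the conditioning $\mathbb E^{\mathbb P'}[\cdot\mid\F_t]$ and the $\F_t$-measurable essential infimum are well posed and the a.s.\ relations are compatible; checking this carefully is where the real work lies, since hypotheses (ii)--(iii) already supply the attaining measures and the (super)martingale properties that would otherwise be the crux of the argument.
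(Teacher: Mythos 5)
Your proposal is correct and follows essentially the same verification argument as the paper: hypothesis (ii) reduces the essential infimum to the expectation under the attaining measure $\mathbb{P}^{\mu^\pi,\sigma^\pi}$, the supermartingale property gives the upper bound $U(t,x)$ for every admissible $\pi$, and the martingale property of $U(s,X_s^{\pi^*})$ under $\mathbb{P}^{\mu^{\pi^*},\sigma^{\pi^*}}$ turns this into an equality at $\pi^*$. Your closing remarks on the measurability and mutual-singularity subtleties go slightly beyond what the paper addresses, but the core chain of inequalities is identical to the published proof.
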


\begin{proof}
  For each $\pi\in \mathcal{A}(x)$,  since $Y_s=U(s,X_s^{\pi})$ is a $\mathbb{P}^{\mu^\pi,\sigma^\pi}$-supermartingale,
      \[\essinf_{ \mathbb{P}'\in \mathcal {P}(t,\mathbb{P}^\theta)}\mathbb{E}^\mathbb{P'}[U(s,X_s^{\pi})\mid \F_t]= \mathbb{E}^{\mathbb{P}^{\mu^\pi,\sigma^\pi}} [U(s,X_s^{\pi})\mid \F_t]\le U(t,X_t^\pi),\;   t\le s, \mathbb{P^{\theta}}\mbox{-}a.s. \]

     Since there exists $\pi^*\in \mathcal{A}(x)$ such that     $Y_s^*=U(s,X_s^{\pi*})$ is a $\mathbb{P}^{ \mu^{\pi^*},\sigma^{\pi^*}}$-martingale, we have
        \[\essinf_{ \mathbb{P}'\in \mathcal {P}(t,\mathbb{P}^\theta)}\mathbb{E}^\mathbb{P'}[U(s,X_s^{\pi})\mid \F_t]= \mathbb{E}^{\mathbb{P}^{\mu^{\pi^{*}},\sigma^{\pi^{*}}}} [U(s,X_s^{\pi^{*}})\mid \F_t]= U(t,X_t^{\pi^{*}}),\;  t\le s, \mathbb{P^{\theta}}\mbox{-}a.s. \]

Recalling the definition of robust forward performance (Definition \ref{RFP}),
we can see that $U$ is a forward performance, and the statement of this theorem is proved.
\end{proof}

Lemma \ref{AssumptionThm} provides a method to find the worst-case scenario of  the mean return and volatility of the risky asset, and the corresponding investment strategy, as stated in Theorem \ref{th4.1}.

\begin{thm} \label{th4.1}
 Let $U$ be a $\mathcal{L}^2(\mathbb{P^\theta})$-smooth stochastic flow on $(\Omega,\mathbb{F},\mathbb{P}^\theta)$ with $\mathbb{P^\theta}\in \mathcal{P}^\Theta$ and $\theta=(\mu_t ,\sigma_t )_{t\ge 0} \in \Gamma^\Theta$, and the mapping $x\rightarrow U(t,x)$ is strictly concave and increasing.
We suppose the following holds.
\begin{enumerate}
\item [(i)]$U$ satisfies  the following  equation
\begin{equation}\label{H1}
    \begin{aligned}
      &\sup_{\pi}\inf_{(\mu,\sigma)\in \Theta}\Big\{\beta(t,{x})+\delta(t,x)\mu+\gamma(t,x)\sigma^2+U_{x}(t,x)(\mu-r)\pi x\\
      &\quad\quad\quad\quad\quad+\eta_x(t,{x})\pi x\sigma^2+\frac{1}{2}U_{xx}(t,x)\pi^2\sigma^2x^2\Big\}=0.
    \end{aligned}
\end{equation}
  \item [(ii)] For any $\pi (t,x) \in \mathbb{R}$, there exists  $(\tilde{\mu}_t,\tilde{\sigma}_t)\in  \Theta$  such that
\begin{eqnarray*}\label{}
&& \inf_{(\mu,\sigma)\in \Theta}\Big\{\delta(t,x)\mu+\gamma(t,x)\sigma^2+U_{x}(t,x)(\mu-r)\pi x+\eta_x(t,{x})\pi x\sigma^2+\frac{1}{2}U_{xx}(t,x)\pi^2\sigma^2x^2\Big\}\\
&&=\delta(t,x)\tilde{\mu}_t+\gamma(t,x)\tilde{\sigma}^2_t+U_{x}(t,x)(\tilde{\mu}_t-r)\pi x+\eta_x(t,{x})\pi x\tilde{\sigma}^2_t+\frac{1}{2}U_{xx}(t,x)\pi^2\tilde{\sigma}^2_tx^2.
\end{eqnarray*}
\end{enumerate}

Let $\pi^{*}(t,x) \in \mathbb{R}$ satisfy
\begin{eqnarray*}\label{}
   \pi^{*}(t,x) &=& \arg\sup\limits_{\pi} \inf_{(\mu,\sigma)\in \Theta}\Big\{\delta(t,x)\mu+\gamma(t,x)\sigma^2+U_{x}(t,x)(\mu-r)\pi x+\eta_x(t,{x})\pi x\sigma^2\\&&\qquad+\frac{1}{2}U_{xx}(t,x)\pi^2\sigma^2x^2\Big\},
\end{eqnarray*}
and $(\mu^{*},\sigma^{*})$ satisfy
\begin{eqnarray}\label{martingale}
   &&\sup\limits_{\pi}\inf\limits_{(\mu,\sigma)\in \Theta}\Big\{ \delta(t,x)\mu+\gamma(t,x)\sigma^2+U_{x}(t,x)(\mu-r)\pi x+\eta_x(t,{x})\pi x\sigma^2+\frac{1}{2}U_{xx}(t,x)\pi^2\sigma^2x^2\Big\}\nonumber\\
&=  & \delta(t,x)\mu^{*}_t+\gamma(t,x)(\sigma^{*}_t)^2+U_{x}(t,x)(\mu^{*}_t-r)\pi^{*} x+\eta_x(t,{x})\pi^{*} x(\sigma^{*}_t)^2\nonumber\\&&+\frac{1}{2}U_{xx}(t,x)(\pi^{*})^2(\sigma^{*}_t)^2x^2.
\end{eqnarray}

Let $X^{*}$ be the unique solution of the stochastic differential equation
\begin{equation*}\label{X:star}
  \ud X^{*}_t=(\mu^{*}_t-r)\pi^{*}_t X_t^* \ud t+\pi^{*}_t X_t^* \sigma^{*}_t\ud W_t^\mathbb{\mu^{*},\sigma^{*}},\quad X^{*}_0 =x.
\end{equation*}
Then $\pi^{*}(t,X_{t}^{*} ) $ solves the Problem \ref{RobustOpt}.
\end{thm}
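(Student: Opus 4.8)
The plan is to reduce Theorem \ref{th4.1} to Lemma \ref{AssumptionThm} by verifying that the three hypotheses of the lemma hold under assumptions (i)--(ii) together with the definitions of $\pi^*$, $(\mu^*,\sigma^*)$, and $X^*$. The central computation is the dynamics of the process $Y_s^\pi := U(s, X_s^\pi)$ for an admissible $\pi$. Since $U$ is an $\mathcal{L}^2(\mathbb{P}^\theta)$-smooth stochastic flow with local characteristics $\big(\beta + \delta\mu + \gamma\sigma^2,\ \eta\sigma\big)$ and $X^\pi$ solves \eqref{Wealth}, I would apply the It\^o--Wentzell formula for stochastic flows. The crucial feature is the cross-variation term between the martingale part $\eta(s,x)\sigma_s\,\ud W_s^\theta$ of the flow and the martingale part $\pi_s X_s^\pi \sigma_s\,\ud W_s^\theta$ of the wealth, which produces exactly $\eta_x(s,X_s^\pi)\,\pi_s X_s^\pi\,\sigma_s^2$. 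After collecting terms, the drift of $Y_s^\pi$ coincides, path by path, with the bracketed expression appearing in \eqref{H1} and in hypothesis (ii), evaluated at $(X_s^\pi, \pi_s, \mu_s, \sigma_s)$.

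With the full drift identified, the supermartingale requirement of Lemma \ref{AssumptionThm}(ii) follows. For fixed $\pi$, hypothesis (ii) supplies a pointwise minimizer $(\mu_t^\pi, \sigma_t^\pi) := (\tilde\mu_t, \tilde\sigma_t) \in \Theta$ of the drift (the $\beta$-term is free of $(\mu,\sigma)$, so it does not affect the minimizer), and this choice is $\mathcal{F}$-progressively measurable, hence lies in $\Gamma^\Theta$. Evaluating the drift at $(\mu^\pi, \sigma^\pi)$ returns the inner infimum of the full drift, which by \eqref{H1} is bounded above by $\sup_\pi \inf_{(\mu,\sigma)}\{\cdots\} = 0$. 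Therefore, under $\mathbb{P}^{\mu^\pi,\sigma^\pi}$, the drift of $Y_s^\pi$ is nonpositive and $Y_s^\pi$ is a local supermartingale; the $\mathcal{L}^2(\mathbb{P}^\theta)$-smoothness of the flow together with admissibility $\pi \in \mathcal{A}(x)$ upgrades it to a genuine supermartingale. To match the essential infimum in Lemma \ref{AssumptionThm}(ii), I would then argue that this pointwise worst case attains the infimum over the whole family $\mathcal{P}^\Theta(t,\mathbb{P}^\theta)$: for any competing $\mathbb{P}' = \mathbb{P}^{\mu',\sigma'}$ the drift is pointwise no smaller than at $(\mu^\pi,\sigma^\pi)$, and a comparison argument for the resulting conditional expectations yields $\mathbb{E}^{\mathbb{P}'}[Y_s^\pi \mid \F_t] \ge \mathbb{E}^{\mathbb{P}^{\mu^\pi,\sigma^\pi}}[Y_s^\pi \mid \F_t]$.

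The martingale requirement of Lemma \ref{AssumptionThm}(iii) is the special case $\pi = \pi^*$. By construction $\pi^*(t,x)$ attains the outer supremum and $(\mu^*,\sigma^*)$ attains the inner infimum at $\pi^*$, so adding the $(\mu,\sigma)$-independent term $\beta(t,x)$ to both sides of \eqref{martingale} and invoking \eqref{H1} shows that the full drift of $Y_s^* = U(s, X_s^*)$, evaluated along $X^*$ under $\mathbb{P}^{\mu^*,\sigma^*}$, is identically zero. Hence $Y^*$ is a local martingale, and the same smoothness and integrability considerations promote it to a true martingale. Strict concavity and monotonicity of $x \mapsto U(t,x)$ is hypothesis (i), assumed directly. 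With all three conditions of Lemma \ref{AssumptionThm} verified, the lemma delivers that $\pi^*(t, X_t^*)$ solves Problem \ref{RobustOpt} with worst-case scenario $(\mu^*, \sigma^*)$.

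The step I expect to be the main obstacle is the passage from pointwise minimization of the drift to attainment of the essential infimum over the whole family $\mathcal{P}^\Theta(t,\mathbb{P}^\theta)$, i.e. the comparison argument in the second paragraph: because the wealth paths, and hence the arguments at which the drift is evaluated, differ across the \emph{mutually singular} measures $\mathbb{P}'$, one cannot compare the conditional expectations termwise and must instead control them through a comparison principle (or an explicit nonlinear-expectation representation of the inner problem). A secondary technical point is the local-to-true martingale upgrade, which requires verifying the integrability needed to rule out strict supermartingale behaviour of the candidate optimizer.
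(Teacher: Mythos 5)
Your proposal follows essentially the same route as the paper: apply the It\^o--Wentzell formula to identify the drift of $U(s,X_s^\pi)$, use hypothesis (ii) to select the pointwise minimizer $(\tilde\mu,\tilde\sigma)$ and \eqref{H1} to obtain the supermartingale/martingale dichotomy, and conclude via Lemma \ref{AssumptionThm}. The obstacle you flag---passing from the pointwise minimization of the drift to attainment of the essential infimum over the mutually singular family $\mathcal{P}^\Theta(t,\mathbb{P}^\theta)$---is precisely the step the paper's own chain of (in)equalities asserts without further justification, so your account is, if anything, more explicit about where the real work lies.
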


\begin{proof}
  Under the regularity conditions on $U$, we apply the It\^{o}-Ventzell formula to $U(t,X^\pi)$ for any admissible portfolio $X^\pi$ under each $\mathbb{P}^\theta\in \mathcal{P}^\Theta$
  \begin{align*}
   \ud U(t,{X_t^\pi}) &=   \left\{\beta (t,{X_t^\pi})+\delta(t,X_t^\pi)\mu_t+\gamma(t,X^\pi_t)\sigma_t^2\right\} \ud t   +   {\eta (t,{X_t^\pi})\sigma_t\ud{W^\theta_t}}  +  {{U_x}(t,{X_t^\pi})\ud{X_t^\pi}}\\
   & \quad+ \frac{1}{2}  {{U_{xx}}(t,{X_t^\pi})\ud \langle {X^\pi} \rangle_t }    +   {{\eta _x}(t,{X_t^\pi})\sigma_t\ud  \langle X^\pi,{W^\theta}{ \rangle _t}}\\
   &=\Big\{\beta(t,{X_t^\pi})+\delta(t,X_t^\pi)\mu_t+\gamma(t,X^\pi_t)\sigma_t^2+U_{x}(t,X_t^\pi)(\mu_t-r)\pi_t X_t^\pi+\eta_x(t,{X_t^\pi})\pi_t{X_t^\pi}\sigma_t^2 \\
   &\quad +\frac{1}{2}U_{xx}(t,X_t^\pi)\pi^2_t\sigma_t^2({X_t^\pi})^2\Big\}\ud t+\left\{\eta(t,X_t^\pi)\sigma_t+U_x(t,X_t^\pi)\pi_t X_t^\pi\sigma_t\right\}\ud W_t^\mathbb{\theta}\,.
  \end{align*}

We denote by $g(t,\mu_t, \sigma_t)=\beta(t,{X_t^\pi})+\delta(t,X_t^\pi)\mu_t+\gamma(t,X^\pi_t)\sigma_t^2+U_{x}(t,X_t^\pi)(\mu_t-r)\pi_t X^\pi+\eta_x(t,{X_t^\pi})\pi_t{X_t^\pi}\sigma_t^2 +\frac{1}{2}U_{xx}(t,X_t^\pi)\pi^2_t\sigma_t^2({X_t^\pi})^2$.

For $t<s$,
 \begin{eqnarray*}
 \essinf_{\mathbb{P}'\in \mathcal {P}(t,\mathbb{P}^\theta)}\mathbb{E}^\mathbb{P'}[U(s,X_s^{\pi})\mid \F_t]
 &= &\essinf_{\mathbb{P'}\in \mathcal {P}(t,\mathbb{P}^\theta)}\mathbb{E}^\mathbb{P'}[U(t,X_t^{\pi})+\int_{t}^{s}g(r,\mu_r, \sigma_r)\ud r\mid \F_t]\\
 &\geq&\essinf_{\mathbb{P}'\in \mathcal {P}(t,\mathbb{P}^\theta)}\mathbb{E}^\mathbb{P'}[U(t,X_t^{\pi})+\int_{t}^{s}\inf\limits_{\mu, \sigma}g(r,\mu , \sigma )\ud r\mid \F_t]\\
 &=&\essinf_{\mathbb{P}'\in \mathcal {P}(t,\mathbb{P}^\theta)}\mathbb{E}^\mathbb{P'}[U(t,X_t^{\pi})+\int_{t}^{s}g(r,\tilde{\mu} , \tilde{\sigma} )\ud r\mid \F_t]\\
 &=&\mathbb{E}^\mathbb{P^{\mu^\pi,\sigma^\pi}}[U(t,X_t^{\pi})+\int_{t}^{s}g(r,\tilde{\mu} , \tilde{\sigma} )\ud r\mid \F_t]\\
  &=&\mathbb{E}^\mathbb{P^{\mu^\pi,\sigma^\pi}}[U(s,X_s^{\pi})\mid \F_t]\,.
 \end{eqnarray*}
where $(\mu^{\pi},  \sigma^{\pi})=(\mu, \sigma)$ on $[0,t]$, and $(\mu^{\pi},  \sigma^{\pi})=(\tilde{\mu}, \tilde{\sigma})$ on $[t,s]$.

Therefore,
\begin{equation}\label{Cor:Lem3}
  \essinf_{\mathbb{P}'\in \mathcal {P}(t,P^\theta)}\mathbb{E}^\mathbb{P'}[U(s,X_s^{\pi})\mid \F_t]
  = \mathbb{E}^\mathbb{P^{\mu^\pi,\sigma^\pi}}[U(s,X_s^{\pi})\mid \F_t]\,.
\end{equation}

 It is obvious that $U(s,X_s^{\pi})$ is a $\mathbb{P}^{\theta}$-supermartingale. From (\ref{martingale}) it follows that
    $U(s,X_s^{\pi^*})$ is a $\mathbb{P}^{ \theta^*}$-martingale.
    Recalling Lemma \ref{AssumptionThm} and \eqref{Cor:Lem3},
    $(\mu^*,\sigma^*)$ represents the worst-case scenario of the mean return and volatility of the risky asset, and $\pi^*$ is the corresponding investment strategy.
\end{proof}

Theorem \ref{th4.1} provides a natural way to construct a robust forward performance measure, optimal investment strategy and the worst-case scenario of the mean return and volatility of risky assets.
We summarize such results in Corollary \ref{Col3}.

 \begin{coll}\label{Col3}

\begin{enumerate}
  \item [(i)] If the $U$ is a robust forward performance measure and the worst-case $(\mu^*,\sigma^*)$ is selected,
the optimal investment strategy is given in the feedback form
 \begin{equation}\label{Strategy:vol}
   \tilde \pi (t,x) =  - \frac{{{\eta _x}(t,x){{\sigma_{t}^*} ^2} + (\mu_{t}^*  - r){U_x}(t,x)}}{{x{{\sigma_{t}^*}^2}{U_{xx}}(t,x)}}\,,
 \end{equation}
 where the first and second term of the optimal strategy will be referred to as its non-myopic and  myopic part, respectively \citep{Musiela2010a}.
  \item [(ii)] If the $U$ is a robust forward performance measure, its characteristics $(\beta,\delta,\gamma,\eta)$
      should satisfy.
      \begin{equation} \label{con}
        \inf_{(\mu,\sigma)\in \Theta}\left\{\beta+\delta\mu+\left(\gamma-\frac{\eta_x^2}{2U_{xx}(t,x)}\right)\sigma^2
        -\frac{(\mu-r)^2U_x^2(t,x)}{2U_{xx}(t,x)\sigma^2}
        -\frac{(\mu-r)U_x\eta_x}{U_{xx}(t,x)}\right\}=0\,,
      \end{equation}
      for   $(t,x)\in [0,\infty)\times\R^+$.
     The solution of condition \eqref{con} leads to the worst-case $(\mu^*,\sigma^*)$.
\end{enumerate}

\end{coll}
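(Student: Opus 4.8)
The plan is to derive both statements directly from the saddle-point equation \eqref{H1} of Theorem \ref{th4.1}, exploiting that the bracketed objective there is a strictly concave quadratic in the control $\pi$.

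For (i), strict concavity of $x\mapsto U(t,x)$ gives $U_{xx}(t,x)<0$, so for each fixed $(\mu,\sigma)\in\Theta$ the expression in \eqref{H1}, viewed as a function of $\pi$, is a strictly concave quadratic; the terms $\beta(t,x)+\delta(t,x)\mu+\gamma(t,x)\sigma^2$ do not depend on $\pi$ and drop out of the maximization. Evaluating at the worst-case pair $(\mu_t^*,\sigma_t^*)$ produced by Theorem \ref{th4.1}, the first-order condition $\partial_\pi\{\cdots\}=0$ reads
\[U_x(t,x)(\mu_t^*-r)x+\eta_x(t,x)x(\sigma_t^*)^2+U_{xx}(t,x)\,\pi\,x^2(\sigma_t^*)^2=0,\]
and solving for $\pi$ yields exactly the feedback form \eqref{Strategy:vol}. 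Strict concavity makes this critical point the unique maximizer, so $\tilde\pi$ coincides with the optimal $\pi^*$.

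For (ii), I would substitute the $\pi$-maximizer back into the bracket. Writing its $\pi$-dependent part as $b\pi+\tfrac12 a\pi^2$ with $a=U_{xx}(t,x)\sigma^2x^2<0$ and $b=x\big(U_x(t,x)(\mu-r)+\eta_x(t,x)\sigma^2\big)$, the supremum over $\pi$ equals $-b^2/(2a)$. Expanding the square in this quantity and adding back the $\pi$-free terms $\beta+\delta\mu+\gamma\sigma^2$ regroups into the $\sigma^2$-coefficient $\gamma-\eta_x^2/(2U_{xx})$, the term $-(\mu-r)^2U_x^2/(2U_{xx}\sigma^2)$, and the cross term $-(\mu-r)U_x\eta_x/U_{xx}$, which is precisely the integrand inside the infimum of \eqref{con}. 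Taking the infimum over $(\mu,\sigma)\in\Theta$ and setting it to zero reproduces \eqref{con}, its minimizer being the worst-case $(\mu^*,\sigma^*)$.

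The one delicate point, which I expect to be the main obstacle, is the order of the two optimizations. Equation \eqref{H1} is stated as a supremum over $\pi$ of an infimum over $(\mu,\sigma)$, whereas the computation in (ii) maximizes over $\pi$ first and minimizes over $(\mu,\sigma)$ afterwards, i.e. it evaluates the inf-sup. The two agree once one exhibits a saddle point: strict concavity in $\pi$ from $U_{xx}<0$, the compact convex geometry of $\Theta$, and the existence of a minimizing $(\tilde\mu,\tilde\sigma)$ granted by hypothesis (ii) of Theorem \ref{th4.1} together ensure that $(\pi^*,\mu^*,\sigma^*)$ is a genuine saddle point, so sup-inf and inf-sup coincide and both vanish.
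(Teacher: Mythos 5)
Your derivation is correct and is precisely the route the paper intends: Corollary \ref{Col3} is stated without a separate proof, being read off from Theorem \ref{th4.1} by the first-order condition in $\pi$ of the strictly concave quadratic in \eqref{H1} (giving \eqref{Strategy:vol}) and by back-substitution of that maximizer, whose value $-b^{2}/(2a)$ regroups exactly into the integrand of \eqref{con}. Your flagging of the $\sup\text{-}\inf$ versus $\inf\text{-}\sup$ order is actually more careful than the paper, which passes over it in silence; be aware, though, that the ``compact convex geometry of $\Theta$'' alone does not deliver a minimax theorem here (for fixed $\pi$ the objective is quadratic in $\sigma$ with a coefficient of indefinite sign, so joint quasi-convexity in $(\mu,\sigma)$ is not automatic), and the interchange ultimately rests on the same saddle-point property that the paper implicitly assumes through conditions (i)--(ii) of Theorem \ref{th4.1} and the definition of $(\pi^{*},\mu^{*},\sigma^{*})$ in \eqref{martingale}.
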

The constraint \eqref{con} on the local characteristics   $(\beta,\delta,\gamma,\eta)$ implies that the
forward performance measure is not unique for a given initial utility function. By specifying three of them,
we can calculate the fourth one.
Hence, the investor in this framework has the freedom to  specify her initial utility, as well as the  additional characteristics of the utility field.
However, in Merton's framework, the dynamics and characteristics of the utility field are derived   from the terminal utility function, which is specified by the investor  at the initial time.
We note that the constraint \eqref{con} holds in the path-wise sense.

The  local characteristics  $(\beta,\delta,\gamma,\eta)$  actually can be used to represent the investor's    attitude through
local risk tolerance $\tau^U(t,x)=-\frac{U_x(t,x)}{U_{xx}(t,x)}$, utility risk premium $\varrho^U(t,x,\sigma)=\frac{\eta_x(t,x)\sigma_t}{U_x(t,x)}$ \citep{Karoui2013},
  and market risk premium $m(\mu,\sigma)=\frac{\mu -r}{\sigma }$.
Actually,
the optimal strategy $\tilde \pi$ \eqref{Strategy:vol} can be written as
\begin{align}
  \tilde \pi (t,x)& =\frac{\mu^*-r}{x{\sigma^*}^2}\tau^U-\frac{\eta_x(t,x)}{xU_{xx}(t,x)} \label{Strategy:vol22}\\
 &=\frac{\tau^U}{\sigma^*x}\left(\frac{\mu^*-r}{\sigma^*}+ \frac{\eta_x(t,x)\sigma^*}{U_x(t,x)} \right)=\frac{\tau^U}{\sigma^*x}\left(m(\mu^*,\sigma^*)+\varrho^U(t,x,\sigma^*) \right)\,.\label{Strategy:vol2}
\end{align}
 The first component of the investment strategy \eqref{Strategy:vol22}, known  as myopic strategy,
 resembles the
investment policy followed by an investor in markets in which the investment opportunity set remains constant through time.
 The second one is called the
excess hedging demand and represents the additional (positive or negative) investment generated by the volatility process $\eta\sigma$ of the performance process $U$  \citep{Musiela2010a}.
 Essentially, the investment strategy \eqref{Strategy:vol2} reveals that it is affected by the investor's risk tolerance, market risk premium, and utility risk premium, as well as the worst-case scenario of  the mean return $\mu$ and the volatility $\sigma$ of the risky asset.
 Obviously, it is   the sum of utility risk premium and market risk premium that determines the trading direction of an investor.
 Such statement holds regardless of the specification    of the robust forward performance measure.
Note that  the worst-case scenario of $(\mu,\sigma)$ is characterized by \eqref{con}.
%
%
To analyze the implication of ambiguity,
we restrict ourself to the robust forward performance measure of special forms, and derive the analytical solution for \eqref{con}.

\section{Robust Forward Performance  of the CRRA type}
\label{CRRA:case}

Utility function of the CRRA type is one of the commonly used utility function, which is a power function of wealth.
We assume an investor's  dynamic preference is characterized by utility function of the CRRA type over the time $t\in[0,\infty)$, with the initial  utility function $u(x)=x^\kappa/\kappa$, $\kappa\in(0,1)$ and time-varying coefficients.
More specifically, we set such   forward performance $U$   of  the following form
\begin{equation}\label{U:pow}
 \left\{
 \begin{aligned}
   U(t,x)&=\frac{ \exp(\alpha(t))}{\kappa}x^\kappa, &&U(0,x)=x^\kappa/\kappa\,,\\
   \ud \alpha(t)&=f(t)\ud t+g(t)\sigma_t\ud W_t^\theta,&&\alpha(0)=0\,,
 \end{aligned}
 \right.
\end{equation}
where  $\kappa\in(0,1)$ and $W^\theta=(W^{\theta}_t)_{t\ge 0}$ is a Brownian motion defined on a
   filtered probability space $(\Omega,\mathbb{F},\mathbb{P}^\theta)$ with $\mathbb{P^\theta}\in \mathcal{P}^\Theta$ and $\theta=(\mu,\sigma)\in \Gamma^\Theta$.
Without loss of generality.
Its differential form  is then given by
\begin{align}
  \ud U(t,x) & =U(t,x)\left(f(t)\ud t+\frac{1}{2}g^2(t)\sigma_t^2\ud t+g(t)\sigma_t\ud W_t^\theta\right),\quad U(0,x)=x^\kappa/\kappa\,, \label{Power:F}
\end{align}
and
\begin{align}
U_x(t,x) & = x^{\kappa-1}\exp(\alpha(t)),\label{Pow:F1}\\
    U_{xx}(t,x)&= (\kappa-1)x^{\kappa-2}\exp(\alpha(t)).
\end{align}
 In this case, the utility risk premium $\varrho^U(t,x,\sigma) =\varrho^U(\sigma)=g(t)\sigma$.

We can rewrite the forward performance measure \eqref{Power:F} in the form of \eqref{RU}, where
\begin{equation}\label{Pow:F2}
  \begin{aligned}
  \beta & =U(t,x)f(t)\,, & \gamma & =\frac{1}{2}U(t,x)g^2(t)\,,&\\
  \delta&=0\,,& \eta&=U(t,x)g(t)\,.&\\
\end{aligned}
\end{equation}
The characteristics $(\beta,\delta,\gamma,\eta)$
can be substituted into the constraints
 \eqref{con}, to specify the structure of the forward performance \eqref{U:pow}.

 If there is no ambiguity on the mean return and volatility, the constraint \eqref{con} is reduced to
 \begin{equation}\label{con2}
   f(t)=\frac{1}{2}\frac{ {g^2}(t){\sigma_t ^2}}{\kappa-1}{\rm{ + }}\frac{\kappa }{{ \kappa-1 }} \left\{ {\frac{1}{2}\frac{{{{(\mu_t  - r)}^2}}}{{{\sigma_t^2}}}{\rm{ + }}(\mu_t  - r)g(t)} \right\}\,,
 \end{equation}
  and the corresponding investment strategy is given by
  \begin{align}
    \pi^*&=\frac{g(t){\sigma_t}^2+(\mu_t-r)}{(1-\kappa){\sigma_t}^2}\notag\\
     & =\frac{1}{(1-\kappa)\sigma_t}\left(\frac{\mu_t-r}{\sigma_t}+g(t)\sigma_t \right)\notag\\
     & =\frac{1}{(1-\kappa)\sigma_t}\left(m(\mu_t,\sigma_t)+\varrho^U(\sigma_t)\right)\,.\label{Str}
  \end{align}
  The optimal investment strategy without ambiguity \eqref{Str}, as well as the optimal strategy with ambiguity \eqref{Strategy:vol2}, implies that the market risk premium and utility risk premium play an important role in the trading direction in both settings.

 In the following sub-sections, we will consider  an investor's conservative beliefs and the forward performance of the CRRA type in different settings: ambiguity on mean return $\mu$,
 ambiguity on the volatility $\sigma$, and ambiguity on both mean return and volatility.
 The structure of forward performance in these settings will involve   optimizations with respect to $\mu$ and $\sigma$, as implied by the constraint \eqref{con}.

\subsection{Ambiguity only on the mean return }
Ambiguity on the mean return is referred to as
  the case where the dynamics of mean return is ambiguous, with known dynamics of volatility.
  For the sake of simplicity, we assume $\sigma_t$ is  known as a  constant $\sigma$.

%
  \begin{prop} Assume an investor's forward preference $U$ is characterized by the initial  utility function $u(x)=x^\kappa/\kappa$  with $\kappa\in (0,1)$\,,  and propagates in the following form
  \begin{align}
  \ud U(t,x) & =U(t,x)\left(f(t)\ud t+\frac{1}{2}g^2(t)\sigma ^2\ud t+g(t)\sigma \ud W_t^\theta\right), \quad U(0,x)=u(x),\label{Power:F}
\end{align}
   where $f$ and $g$ are deterministic functions of $t$,   $\sigma$ is the volatility of the risky asset, and $W^\theta$ is a Brownian motion defined on a
   filtered probability space $(\Omega,\mathbb{F},\mathbb{P}^\theta)$ with $\mathbb{P^\theta}\in \mathcal{P}^\Theta$ and $\theta=(\mu,\sigma)\in \Gamma^\Theta$.

If the investor's ambiguity is characterized by the lower bound $\underline{\mu}$ and upper bound $\overline{\mu}$ of $\mu$, $f$ should satisfy the following condition
 \begin{equation}\label{Power:f3aa}
f (t)= \frac{1}{2}\frac{ {g^2}(t){\sigma ^2}}{\kappa-1}{\rm{ + }}\frac{\kappa }{{ \kappa-1 }} \left\{ {\frac{1}{2}\frac{{{{(\mu^*  - r)}^2}}}{{{\sigma^2}}}{\rm{ + }}(\mu^*  - r)g(t)} \right\}\,,
 \end{equation}
where
 \begin{equation}\label{Ustar}
   \mu^*=
   \left\{
   \begin{aligned}
     &\overline{\mu}\,, && \quad\mbox{ if } \;\frac{\overline{\mu}-r}{\sigma}<-g(t)\sigma  \,,&\\
    & r-g\sigma^2\,, & &\quad\mbox{ if }\;\frac{\underline{\mu}-r}{\sigma}\le -g(t)\sigma \le \frac{\overline{\mu}-r}{\sigma},\\
    & \underline{\mu}\,,&& \quad\mbox{ if }\;\frac{\underline{\mu}-r}{\sigma}\ge -g(t)\sigma \,.&
   \end{aligned}
   \right.
   \end{equation}
   Corresponding to the selection of worst-case mean return $\mu^*$, the investment strategy $\pi^*$ is given by
\begin{equation} \label{Pow:Stratergy:m}
  \pi^*=\frac{g(t){\sigma }^2+(\mu^*-r)}{(1-\kappa){\sigma }^2}\,.
\end{equation}
  \end{prop}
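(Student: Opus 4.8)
The plan is to specialize the master constraint \eqref{con} from Corollary \ref{Col3} to the CRRA characteristics recorded in \eqref{Pow:F2}, and then solve the resulting one-dimensional optimization over $\mu$. First I would assemble the elementary quantities that feed into \eqref{con}: from \eqref{Pow:F1} together with $\eta=U g(t)$ I get $\eta_x=g(t)U_x$, and the two ratios $\frac{U_x^2}{U_{xx}}=\frac{x^\kappa\exp(\alpha(t))}{\kappa-1}$ and $\frac{U_x}{U_{xx}}=\frac{x}{\kappa-1}$, both of which carry the sign information coming from $\kappa-1<0$. Since the volatility is frozen at the constant $\sigma$, the infimum in \eqref{con} is effectively taken only over $\mu\in[\underline{\mu},\overline{\mu}]$.

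Next I would isolate the $\mu$-dependent part of the bracket in \eqref{con}. Substituting $\delta=0$ and $\eta_x=g(t)U_x$, the terms involving $\mu$ collapse to $-\frac{U_x^2}{U_{xx}}\bigl[\frac{(\mu-r)^2}{2\sigma^2}+(\mu-r)g(t)\bigr]$. Because $-\frac{U_x^2}{U_{xx}}>0$, minimizing over $\mu$ is equivalent to minimizing the convex quadratic $\phi(\mu)=\frac{(\mu-r)^2}{2\sigma^2}+(\mu-r)g(t)$, whose unconstrained stationary point solves $\frac{\mu-r}{\sigma^2}+g(t)=0$, that is $\mu=r-g(t)\sigma^2$. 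Projecting this minimizer onto $[\underline{\mu},\overline{\mu}]$ then produces the three branches of \eqref{Ustar}, where the separating conditions are merely the rewriting $\mu=r-g(t)\sigma^2\Leftrightarrow\frac{\mu-r}{\sigma}=-g(t)\sigma$: if the whole interval lies to the left of the stationary point ($\frac{\overline{\mu}-r}{\sigma}<-g(t)\sigma$) then $\phi$ is decreasing there and the minimum is at $\mu^*=\overline{\mu}$; if it lies to the right ($\frac{\underline{\mu}-r}{\sigma}\ge -g(t)\sigma$) then $\phi$ is increasing and $\mu^*=\underline{\mu}$; and the interior case returns $\mu^*=r-g(t)\sigma^2$.

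With the worst-case $\mu^*$ selected, I would then enforce the constraint \eqref{con} to pin down $f$. Inserting $\beta=Uf$, $\gamma=\tfrac12 U g^2$, $\delta=0$ and $\eta_x=g(t)U_x$ into \eqref{con} and using the ratios above, every surviving term carries the common positive factor $\frac{x^\kappa\exp(\alpha(t))}{\kappa}$; dividing it out leaves exactly the scalar identity \eqref{Power:f3aa}. Finally, the strategy \eqref{Pow:Stratergy:m} follows by substituting $\sigma_t^*=\sigma$ and $\eta_x=g(t)U_x$ into the feedback form \eqref{Strategy:vol} of Corollary \ref{Col3} and simplifying with $\frac{U_x}{U_{xx}}=\frac{x}{\kappa-1}$.

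The only delicate point, and the one I would verify most carefully, is the sign bookkeeping: since $\kappa\in(0,1)$ makes $\kappa-1$, and hence $U_{xx}$ and $\frac{U_x^2}{U_{xx}}$, negative, the apparent minus signs in \eqref{con} actually turn the $\mu$-objective into a genuine minimization of a convex function rather than a maximization. Getting this orientation right is precisely what ensures that the three regimes of \eqref{Ustar} land on the correct endpoints (or interior stationary point), so the bulk of the work is careful substitution rather than any deeper difficulty.
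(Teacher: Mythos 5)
Your proposal is correct and follows essentially the same route as the paper: both reduce the constraint \eqref{con} to the scalar identity \eqref{Power:f3aa} using the CRRA characteristics \eqref{Pow:F2}, obtain $\mu^*$ by optimizing the quadratic in $\mu$ over $[\underline{\mu},\overline{\mu}]$ (the paper phrases it as a supremum with the negative prefactor $\tfrac{1}{\kappa-1}$, which is exactly your minimization of $\phi$), and read off $\pi^*$ from \eqref{Strategy:vol}. Your write-up merely makes explicit the sign bookkeeping that the paper dismisses as ``simple calculations.''
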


  \begin{proof}
     In this case, the constraint \eqref{con} is reduced to
  \begin{equation}\label{Power:f3a}
  f(t) = \frac{1}{2}\frac{ {g^2}(t){\sigma ^2}}{\kappa-1}{\rm{ + }}\kappa {\sup _{\mu  \in [\underline{\mu},\overline{\mu}]}}\left\{ {\frac{1}{2}\frac{{{{(\mu  - r)}^2}}}{{(\kappa  - 1)\sigma  ^2}}{\rm{ + }}\frac{{(\mu  - r)g(t)}}{{\kappa  - 1}}} \right\}\,,
 \end{equation}
 Assume the supermum is achieved at $\mu^*$.
 Simple calculations lead to
 \begin{equation} \label{UStare2}
   \mu^*=
   \left\{
   \begin{aligned}
     &\overline{\mu}\,, && \quad\mbox{ if } \;\overline{\mu}-r<-g(t)\sigma^2 \,,&\\
    & r-g\sigma^2\,, & &\quad\mbox{ if }\;\underline{\mu}-r\le -g(t)\sigma^2\le \overline{\mu}-r,\\
    & \underline{\mu}\,,&& \quad\mbox{ if }\;\underline{\mu}-r\ge -g(t)\sigma^2\,.&
   \end{aligned}
   \right.
   \end{equation}
   Due to $\sigma>0$, the belief on the worst-case return \eqref{UStare2} is equivalent to that given by \eqref{Ustar}.
   Correspondingly,
the optimal strategy \eqref{Strategy:vol} is   reduced to
  \begin{equation}
  \pi^*=\frac{g(t){\sigma}^2+(\mu^*-r)}{(1-\kappa){\sigma}^2}\,.
\end{equation}
  \end{proof}

We can  interpret the selection rule \eqref{Ustar} from the premium point of view.
Recalling the definition of the market risk premium  $m(\mu,\sigma)$ and the utility risk premium $\varrho^U(\sigma)$, i.e.,
 \[m(\mu,\sigma)=\frac{\mu-r}{\sigma} \quad\mbox{and}\quad\varrho^U(\sigma)=g(t)\sigma\,,\]
we can rewrite \eqref{Ustar} as
 \begin{equation}\label{Ustar2}
   \mu^*=
   \left\{
   \begin{aligned}
     &\overline{\mu}\,, && \quad\mbox{ if } \;m(\overline{\mu},\sigma)+\varrho^U(\sigma)<0 \,,&\\
    & r-g\sigma^2\,, & &\quad\mbox{ if }\;m(\underline{\mu},\sigma)+\varrho^U(\sigma)\le 0\le m(\overline{\mu},\sigma)+\varrho^U(\sigma) \,,&\\
    & \underline{\mu}\,,&& \quad\mbox{ if }\;m(\underline{\mu},\sigma)+\varrho^U(\sigma)>0 \,.&
   \end{aligned}
   \right.
   \end{equation}

It implies that the   worst-case mean return and the trading direction depend  on the  total risk premium that the investor can achieve in the setting of ambiguity on mean return, i.e., $ m(\mu,\sigma)+\varrho^U (\sigma)$\,.
When  $m(\underline{\mu},\sigma)+\varrho^U(\sigma)$ is positive, an investor will take $\underline{\mu}$ as the worst-case  mean return, and take a long position $(\pi>0)$.
When $m(\overline{\mu},\sigma)+\varrho^U(\sigma)$ is negative, an investor will take $\overline{\mu}$ as the worst case, and take a short position $(\pi<0)$.
Otherwise, she will take $r-g\sigma^2$ as the worst-case mean return, and do not invest on the risky asset $(\pi=0)$.
From this point of view,
it is the total risk premium that characterizes  the worst-case mean return and the investor's trading direction.

Such premium-based rule \eqref{Ustar} on the  conservative belief towards  the mean return is consistent with the rule proposed by
\cite{Chong2018} and \cite{Lin2014c}.  \cite{Chong2018}  propose  to select the worst-case scenario of the mean return in a feedback form associated with the position on risky assets, i.e.,
the long  and short positions correspond to $\underline{\mu}$ and $\overline{\mu}$, respectively.
In the classical framework, the selection of worst-case mean return dependents on the investor's position on the risky asset, as argued by \cite{Lin2014c} that
nature decides for a low drift if an investor takes a long position, and for a high drift if an investor takes a long position.
 However, the  rule \eqref{Ustar} is not given in a feedback form associated with an investor's position, but  directly related to   the market situations and the investor's utility risk premium.
%
In this new framework, we highlight the combination effect of the utility risk premium and the market risk premium on the worst-case   mean return of the risky asset.




\subsection{Ambiguity only on volatility}

We refer to volatility ambiguity as the case where the dynamics of volatility is unknown, but constrained in the interval $[\underline{\sigma},\overline{\sigma}]$ with $0< \underline{\sigma}\le\overline{\sigma}$.
For the sake of simplicity, we suppose  $\mu_t$ to be a constant $\mu$ over the time.

 \begin{prop} Assume an investor's preference $U$ is characterized by the initial  utility function $u(x)=x^\kappa/\kappa$  with $\kappa\in (0,1)$\,,  and propagates in the following form
  \begin{align}
  \ud U(t,x) & =U(t,x)\left(f(t)\ud t+\frac{1}{2}g^2(t)\sigma_t^2\ud t+g(t)\sigma_t\ud W_t^\theta\right), \quad U(0,x)=u(x),\label{Power:F}
\end{align}
   where $f$ and $g$ are deterministic functions of $t$,   $\sigma$ is the volatility of the risky asset, and $W^\theta$ is  a Brownian motion defined on the
   filtered probability space $(\Omega,\mathbb{F},\mathbb{P}^\theta)$  with $\mathbb{P^\theta}\in \mathcal{P}^\Theta$ and $\theta=(\mu,\sigma)\in \Gamma^\Theta$.

If the investor ambiguity is characterized by the lower bound $\underline{\sigma}$ and upper bound $\overline{\sigma}$ of $\sigma$,
$f$ should satisfy the following structure
 \begin{equation}\label{Power:fv}
  f (t)= \frac{{\kappa (\mu  - r)g(t)}}{{\kappa  - 1}} + \frac{1}{2(\kappa-1)}  {\left( {{g^2}(t){{\sigma^*} ^2} + \frac{{\kappa {{(\mu  - r)}^2}}}{{{{\sigma^*} ^2}}}} \right)}  \,,
 \end{equation}
 where
  \begin{equation}\label{Con:Pow}
   {\sigma^*}^2=\left\{
   \begin{aligned}
     &\underline{\sigma}^2,&&\mbox{ if } g^2(t)\ge \frac{\kappa(\mu-r)^2}{\underline{\sigma}^4},\\
   & \overline{\sigma}^2,&&\mbox{ if } g^2(t)\le \frac{\kappa(\mu-r)^2}{\overline{\sigma}^4},\\
   & \frac{|\mu-r|}{|g(t)|}\sqrt{\kappa},&&\mbox{ if } \frac{\kappa(\mu-r)^2}{\overline{\sigma}^4} \le g^2(t)\le \frac{\kappa(\mu-r)^2}{\underline{\sigma}^4}.
   \end{aligned}
   \right.
 \end{equation}
 Correspondingly, the optimal investment strategy is
\[ \pi^*  =\frac{g(t){\sigma^*}^2+(\mu-r)}{{\sigma^*}^2(1-\kappa)}\,.\]
\end{prop}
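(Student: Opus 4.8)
The plan is to specialize the path-wise characterization of Corollary~\ref{Col3}(ii) — the constraint \eqref{con} — to the CRRA ansatz and then reduce it to a one-dimensional convex minimization in the volatility. First I would substitute the local characteristics \eqref{Pow:F2}, namely $\beta=Uf$, $\delta=0$, $\gamma=\tfrac12 Ug^2$, $\eta=Ug$, together with the derivatives \eqref{Pow:F1}. Two algebraic identities make everything collapse: $\eta_x=U_x g$ and $U_x^2/U_{xx}=\kappa U/(\kappa-1)$, the latter being immediate from $U=\kappa^{-1}\exp(\alpha(t))\,x^\kappa$. Because $\mu$ is held fixed and $\delta=0$, after dividing the bracket in \eqref{con} by the strictly positive factor $U$, the infimum runs over $\sigma\in[\underline\sigma,\overline\sigma]$ only, and the expression becomes $f+\phi(\sigma^2)-\tfrac{\kappa(\mu-r)g}{\kappa-1}$ with
\[
\phi(v)=Av+\frac{B}{v},\qquad A=-\frac{g^2}{2(\kappa-1)},\quad B=-\frac{\kappa(\mu-r)^2}{2(\kappa-1)}.
\]

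Next I would analyze $\phi$ on $v\in[\underline\sigma^2,\overline\sigma^2]$. Since $\kappa\in(0,1)$ forces $\kappa-1<0$, both $A$ and $B$ are nonnegative, so $\phi$ is convex on $(0,\infty)$ with unconstrained minimizer $v^*=\sqrt{B/A}=\sqrt{\kappa}\,|\mu-r|/|g|$, which is exactly the interior value in \eqref{Con:Pow}. The worst case (the infimum) is therefore attained at $v^*$ when $v^*\in[\underline\sigma^2,\overline\sigma^2]$, and at the nearer endpoint otherwise; translating $\underline\sigma^2\le v^*\le\overline\sigma^2$ into conditions on $g^2(t)$ reproduces the three-case rule for $\sigma^{*2}$ in \eqref{Con:Pow}. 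Imposing $\inf_\sigma\{\,\cdot\,\}=0$ then solves for $f$ and yields \eqref{Power:fv} verbatim.

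Finally, I would recover the strategy by substituting $\mu^*=\mu$, $\eta_x=U_x g$ and the CRRA derivatives into the feedback formula \eqref{Strategy:vol}; the common factor $x^{\kappa-1}\exp(\alpha(t))$ cancels between numerator and denominator, leaving $\pi^*=\big(g(t)\sigma^{*2}+(\mu-r)\big)/\big((1-\kappa)\sigma^{*2}\big)$, as claimed.

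I expect the only genuine subtlety to be \emph{sign bookkeeping}: one must use $\kappa-1<0$ to recognize that the adversary's problem is the \emph{minimization} of a convex function of $v=\sigma^2$ (not a maximization), so that the endpoint-versus-interior dichotomy is the correct selection rule and the critical point $v^*$ is truly a minimum. The degenerate cases $g(t)=0$ or $\mu=r$, where $\phi$ degenerates to a monotone function, fall into the endpoint branches and require only a one-line remark.
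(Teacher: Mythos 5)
Your proposal is correct and follows essentially the same route as the paper: the paper likewise substitutes the CRRA characteristics into \eqref{con}, reduces it to a one-dimensional optimization in $\theta=\sigma^2$ over $[\underline{\sigma}^2,\overline{\sigma}^2]$ (phrased there as maximizing the concave $h(\theta)=-g^2(t)\theta-\kappa(\mu-r)^2/\theta$, which is your convex minimization of $\phi$ up to a sign), and reads off the three-case rule from the interior-versus-endpoint location of the critical point $\sqrt{\kappa}\,|\mu-r|/|g(t)|$. Your sign bookkeeping via $\kappa-1<0$ and the final cancellation in the feedback formula \eqref{Strategy:vol} both match the paper's computation.
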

\begin{proof}
  In this setting, the constraint  \eqref{con} is reduced to
 \begin{equation}\label{Power:fv}
  f = \frac{{\kappa (\mu  - r)g(t)}}{{\kappa  - 1}} + {\sup _{{\sigma ^2\in[\underline{\sigma}^2,\overline{\sigma}^2]}}} \frac{1}{2( \kappa-1)}  {\left( {{g^2}(t){\sigma ^2} + \frac{{\kappa {{(\mu  - r)}^2}}}{{{\sigma ^2}}}} \right)}  \,.
 \end{equation}

 To solve the optimization problem, we denote $\theta:=\sigma^2$, and define a function $h$ by
 \[h(\theta)=-g^2(t)\theta-\frac{\kappa(\mu-r)^2}{\theta},\quad \theta\in[\underline{\sigma}^2,\overline{\sigma}^2]\,.\]
 Simply analysis can provide that $h$ reaches  its maximum at $\theta^*$, where
 \begin{equation}\label{Con:Pow}
  \theta^*=\left\{
   \begin{aligned}
     &\underline{\sigma}^2,&&\mbox{ if } g^2(t)\underline{\sigma}^2\ge \frac{\kappa(\mu-r)^2}{\underline{\sigma}^2},\\
   & \overline{\sigma}^2,&&\mbox{ if } g^2(t)\overline{\sigma}^2\le \frac{\kappa(\mu-r)^2}{\overline{\sigma}^2},\\
   & \frac{|\mu-r|}{|g(t)|}\sqrt{\kappa},&&\mbox{ if } \frac{\kappa(\mu-r)^2}{\overline{\sigma}^4} \le g^2(t)\le \frac{\kappa(\mu-r)^2}{\underline{\sigma}^4}.
   \end{aligned}
   \right.
 \end{equation}
 Due to $\theta=\sigma^2$, we have the worst-case scenario ${\sigma^*}^2$ of $\sigma^2$ \eqref{Con:Pow}.
\end{proof}

The conservative belief on the volatility depends on the market risk premium and the utility risk premium, as the case of the conservative belief on mean return \eqref{Ustar} or \eqref{Ustar2}.
We will show that it is the relative value of these two premiums that determines the conservative belief on volatility.
Note that it is  the sum of these two premium  determines the conservative belief on mean return, as shown by \eqref{Ustar} or \eqref{Ustar2}.
In our specific setting,   ambiguity on mean return only affects the market risk premium, while ambiguity on volatility affects both the market risk premium and the utility risk premium.
It is then natural to consider the effects of their relative value.

  Define the relative value $\tau(\sigma)$ of the utility risk premium  $\varrho^U(\sigma)$ with respect to the market price of risk $m(\mu,\sigma)$ as
  \[\tau(\sigma)=\frac{\varrho^U(\sigma)}{m(\mu,\sigma)}:=\frac{g(t)\sigma}{\frac{ \mu-r }{\sigma}}\,.\]
  Then, the   worst-case volatility \eqref{Con:Pow} can be rewritten as
    \begin{equation}\label{Con:Pow2}
   {\sigma_t^*}^2=\left\{
   \begin{aligned}
     &\underline{\sigma}^2,&&\mbox{ if } \tau^2(\underline{\sigma})\ge\kappa,\\
   & \overline{\sigma}^2,&&\mbox{ if } \tau^2(\overline{\sigma})\le\kappa ,\\
   & \frac{|\mu-r|}{|g(t)|}\sqrt{\kappa},&&\mbox{  otherwise} .
   \end{aligned}
   \right.
 \end{equation}
    The rule \eqref{Con:Pow2} for worst-case volatility implies  that
   if the relative value of  the utility risk premium over the market risk premium is large enough than the investor's risk-averse attitude $\kappa$,
   the investor will take $\underline{\sigma}$ as the worst-case volatility.
   Alternatively, if such relative value is smaller enough than the investor's risk-averse attitude, the investor will take $\overline{\sigma}$ as the worst-case volatility.
   Otherwise,
   the   worst-case volatility depends on her attitude toward risk and ambiguity about her future preferences.
   Overall,
   an ambiguity-averse investor will take her attitude toward risk and ambiguity into account when ambiguous on the volatility of the driving force of market randomness.



\subsection{Structured ambiguity on mean return and volatility}

Empirical research shows that the mean return can be either positively or negatively related to the volatility of risky assets \citep[see e.g.][]{Omori2007,Bandi2012,Yu2012}.
Without a consensus of their relation,
we employ a flexible model to capture the structured ambiguity on the mean return and volatility of the driving force of market randomness \citep{Epstein2013,Epstein2014},
\begin{equation}\label{structure}
 \Theta=\Big\{(\mu,\sigma)\mid \sigma^2  = \sigma_0^2+\alpha z, \mu-r  =\mu_0+z, z\in [z_{ 1},z_{2}]\Big\},
\end{equation}
where $\sigma_0,\mu_0>0$ and
$\alpha\in\R$ such that $\sigma^2>0$.
$\alpha>0$ implies that the return is positively related to the volatility, and vice versa.
The selection of worst-case value of mean return and volatility
will be reduced to the selection of $z^*\in [z_{ 1},z_{2}]$, where the spread $z_2-z_1$ represents the size of an investor's ambiguity on the mean return and volatility.

%
%
%


Recalling the constraints \eqref{con} and \eqref{Pow:F2},
  we have
  \begin{align}
    f &  = \sup _{\mu,\sigma^2}\frac{1}{\kappa-1}\left\{\kappa g(t)(\mu-r)+\frac{1}{2}\left(g^2(t)\sigma^2+\frac{\kappa(\mu-r)^2}{\sigma^2}\right)\right\}\notag\\
  &={\sup _{{z}}}\frac{1}{\kappa-1}\left\{  {{\kappa (\mu_0+z  )g(t)}}{ } +{\frac{1}{{2 }}\left( { {(\sigma_0 ^2+\alpha z)}{g^2(t)} + \frac{{\kappa {{(\mu_0  +z)}^2}}}{{{\sigma_0 ^2+\alpha z}}}} \right)} \right\}\notag\\
  &=  {\sup _{{z\in[z_1,z_2]}}}\frac{1}{\kappa-1}\left\{ az+\frac{b}{2(\sigma_0^2+\alpha z)}+c  \right\} \,,\label{Power:structure}
  \end{align}
  where
  \begin{equation}\label{abc}
    \left\{
     \begin{aligned}
 a &= \kappa g(t) + \frac{1}{2}{g^2}(t)\sigma _0^2 + \frac{\kappa }{{2\alpha }} \,,\\
 b &= \kappa {\left( {{\mu _0} - \frac{{\sigma _0^2}}{\alpha }} \right)^2} \,,\\
 c& = \kappa {\mu _0}g(t) + \frac{1}{2}\sigma _0^2{g^2}(t) + \frac{{\kappa \sigma _0^2}}{{2{\alpha ^2}}} + \frac{{\kappa \left( {\alpha {\mu _0} - \sigma _0^2} \right)}}{{{\alpha ^2}}}\,.
  \end{aligned}
    \right.
  \end{equation}

%
 For any given set of  the parameters $(\sigma_0,\mu_0,\kappa,\alpha,z_1,z_2,g)$,
 we can easily solve the problem \eqref{Power:structure} with respect to $z\in[z_1,z_2]$, and the optimal investment strategy is correspondingly given as
 the expression \eqref{Strategy:vol}.
 The analytical expression for   $z^*$ is omitted here, since it is  not very expressive, in the sense that
    the solution for  \eqref{Power:structure} does not provide a straightforward intuition for the determinants of the conservative beliefs.
    Obviously, the value of $z^*$ depends on the interval $[z_1,z_2]$ and the shape of \eqref{Power:structure}.
    To derive more intuitional  information on the conservative beliefs and its dependence on  $z^*$,
    we define
    \begin{equation}
      \hat f(z)=\frac{1}{\kappa-1}\left\{ az+\frac{b}{2(\sigma_0^2+\alpha z)}+c  \right\} ,
    \end{equation}
    where $a,b,c$ are given in \eqref{abc}.
    The second order derivative $\hat f^{''}$ of $\hat f$ with respect to $z$ is
    \[\hat f^{''}(z)=\frac{\alpha^2b}{(\kappa-1)(\sigma_0^2+\alpha z)^3}\,.\]
 Since $\kappa\in(0,1),  \sigma_0^2+\alpha z>0$, and $b>0$, we have
    \[\hat f^{''}(z)<0\,\mbox{ for } z\in[z_1,z_2]\,.\]
    That is, $\hat f$ is a concave function on $[z_1,z_2]$.
    Such property relates  $z^*$  to the model parameters and the concavity  of $\hat f$,
     as shown in the Figure \ref{Fig1} with some toy examples.

    These toy examples show the concavity  of $\hat f$ in the setting of $\alpha=0.5$ and $\alpha=-0.5$ with the following common
 parameters
  \[
  \kappa=0.4,\quad \mu_0=0.02,\quad\sigma_0^2=0.1,\quad g(t)\equiv0.1\,.
  \]
  For each $\alpha$, we denote by $\tilde{z}$ the value of $z\in [-0.2,0.2]$ at which $\hat f$ reaches its maximum.
  Then,
     we have three cases of $[z_1,z_2]\subseteq[-0.2,0.2]$  for each $\alpha$, \emph{i.e.},  $z_2<\tilde{z}$, $z_1<\tilde{z}<z_2$, and $\tilde{z}<z_1$.
   Take the case of $\alpha=0.5$ and $ z_2<\tilde{z}$ for example,
   $\hat f$ reaches its maximum at $z^*=z_2$ if $z\in[z_1,z_2]$.
   Correspondingly, we have $\mu^*=\overline{\mu}$ and $\sigma^*=\overline{\sigma}^2$.
   One can easily figure  out $z^*$ in the other cases from Figure \ref{Fig1}.
   We summarize these toy examples in Table \ref{opz22}.
  Generally speaking,
 $z^*$ may take the upper or lower bound of the interval for $z$, or some value lying in the interval.
  When the mean return is positively related to the volatility of the risky asset $(\alpha>0)$, the worst-case scenario of   these two parameters is $(\underline{\mu},\underline{\sigma}^2)$, $(\overline{\mu},\overline{\sigma}^2)$, or some intermediate value depending on some $\tilde{z}\in[z_1,z_2]$.
  When they are  negatively related,
  the conservative belief is  $(\underline{\mu},\overline{\sigma}^2)$, $(\overline{\mu},\underline{\sigma}^2)$, or some intermediate value depending on some $\tilde{z}\in[z_1,z_2]$.

  \begin{table}[h]
  \centering
  \caption{Conservative belief on the mean return and the volatility}\label{opz22}
  \vspace{3mm}
    \setlength{\tabcolsep}{6mm}{
    \begin{tabular}{|c|c|c|c|c|c|c|}
     \hline
      & \multicolumn{3}{c|}{$\alpha>0$} &\multicolumn{3}{c|}{$\alpha<0$}   \\
      \hline
    $z^*$ & $z_1$ &$\tilde{z}\in (z_1,z_2)$& $z_2$ & $z_{1}$ & $\tilde{z}\in (z_1,z_2)$& $z_{2}$ \\
      \hline
      $\mu^*$ & $\underline{\mu}$  & $\mu_0+r-\tilde{z}$&$\overline{\mu}$   & $\underline{\mu}$   &$\mu_0+r-\tilde{z}$&$\overline{\mu}$    \\
      ${\sigma^2}^*$ &  $\underline{\sigma}^2$ & $\sigma_0^2+\alpha \tilde{z}$&$\overline{\sigma}^2$ & $\overline{\sigma}^2$ & $\sigma_0^2+\alpha \tilde{z}$ &$\underline{\sigma}^2$\\
      \hline
   \end{tabular}}
\end{table}

By specifying the interval $[z_1,z_2]$,
we can not only verify the conservative belief on $(\mu,\sigma^2)$ given in Table \ref{opz22} or Figure \ref{Fig1}, but also the relation between trading direction and total risk premium.
Some alternatives for $[z_1,z_2]$  are given in Table \ref{opz2}.
The worst-case scenario  $(\mu^*,\sigma^*)$ is consistent with the implications of Table \ref{opz22} or Figure \ref{Fig1}.
The corresponding investment strategy and total risk premium   listed in the last two columns  show that the investor will take a long position on the risky assets if the total risk premium is position, and vice versa.
This is consistent with our theoretical statements, as given by \eqref{Strategy:vol2}.

\begin{table}[h]
  \centering
  \caption{Utility parameters and the corresponding worst-case mean return and volatility}\label{opz2}
  \vspace{3mm}
   \begin{tabular}{    r r r c c c r c}
     \toprule
         \multicolumn{1}{ c } { $z_1$} & \multicolumn{1}{ c } {$z_2$} & \multicolumn{1}{ c } {$\alpha$} & $z^*$&$u^*$ & $\sigma^*$ &\multicolumn{1}{ c }{$\pi^*$} & Total Risk Premium\\
     \midrule
                  -0.15   &  -0.08 &  0.5 & $z_2$ &$\overline{\mu}$& $\overline{\sigma}^2$&-0.0795 &$-$\\
              -0.08 &0.07&0.5&  -0.0289& $r -$0.0089  & $ \sigma_0^2-$0.0145&-0.0059 &$-$\\
               0.02& 0.12& 0.5 &$z_1$&$\underline{\mu}$ & $\underline{\sigma}^2$&0.7727 &$+$\\
                  -0.15 &   -0.08&   -0.5 & $z_2$ &$\overline{\mu}$& $\underline{\sigma}^2$& -0.5476 &$-$\\
               -0.08  &0.07&-0.5& -0.0311 & $r-$  0.0111& $ \sigma_0^2+$0.0156&0.0066 &$+$\\
                0.02& 0.12& -0.5 &$z_1$& $\underline{\mu}$ & $\overline{\sigma}^2$&0.9074 &$+$\\
     \bottomrule
   \end{tabular}
\end{table}

%
%


\begin{figure}
  \centering
  \subfigure[ ]{
  \begin{minipage}[t]{0.45\linewidth}
  \centering
  \includegraphics[scale=0.5]{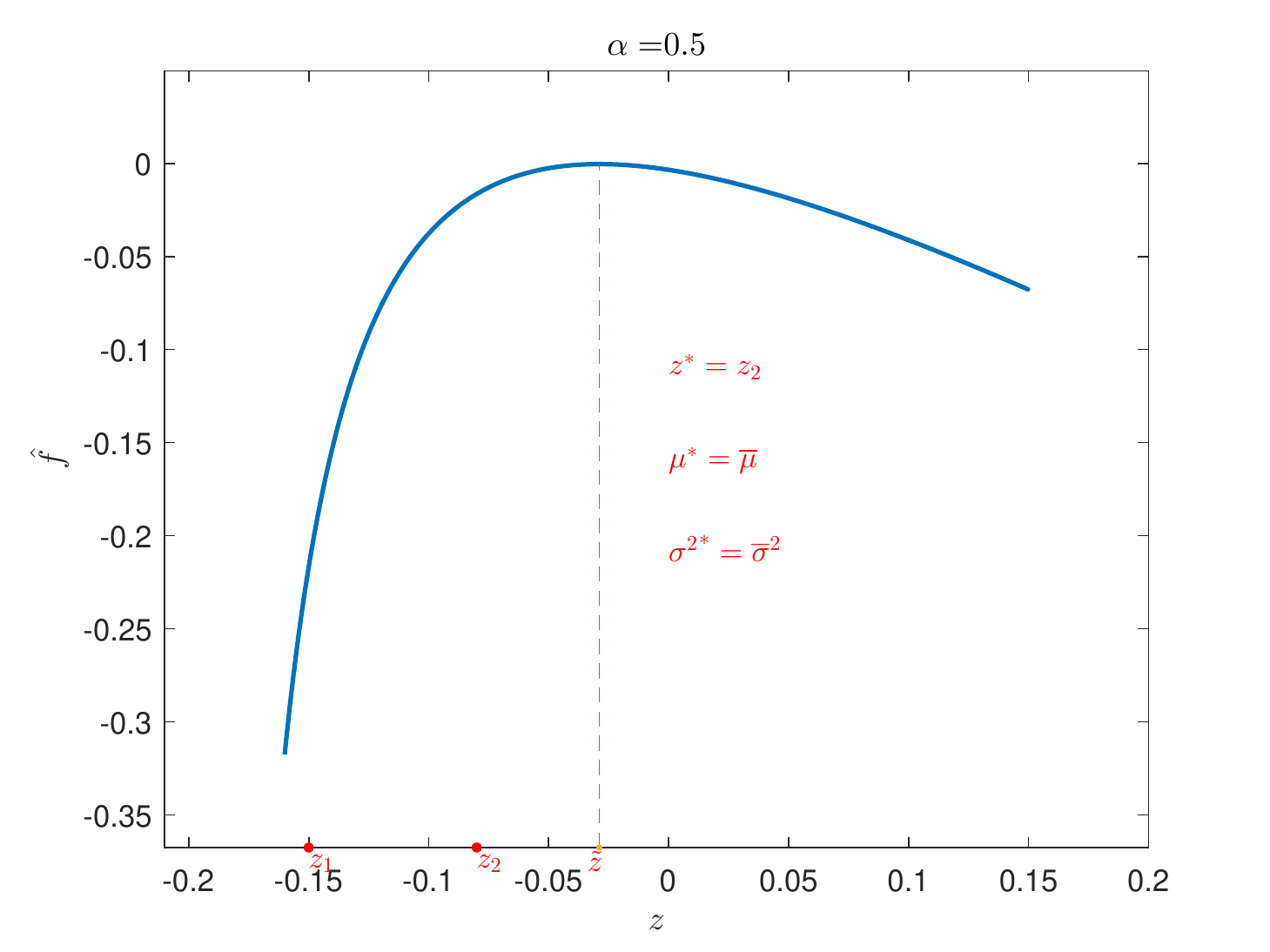}
  \end{minipage}
 }
\subfigure[ ]{
  \begin{minipage}[t]{0.45\linewidth}
  \centering
  \includegraphics[scale=0.5]{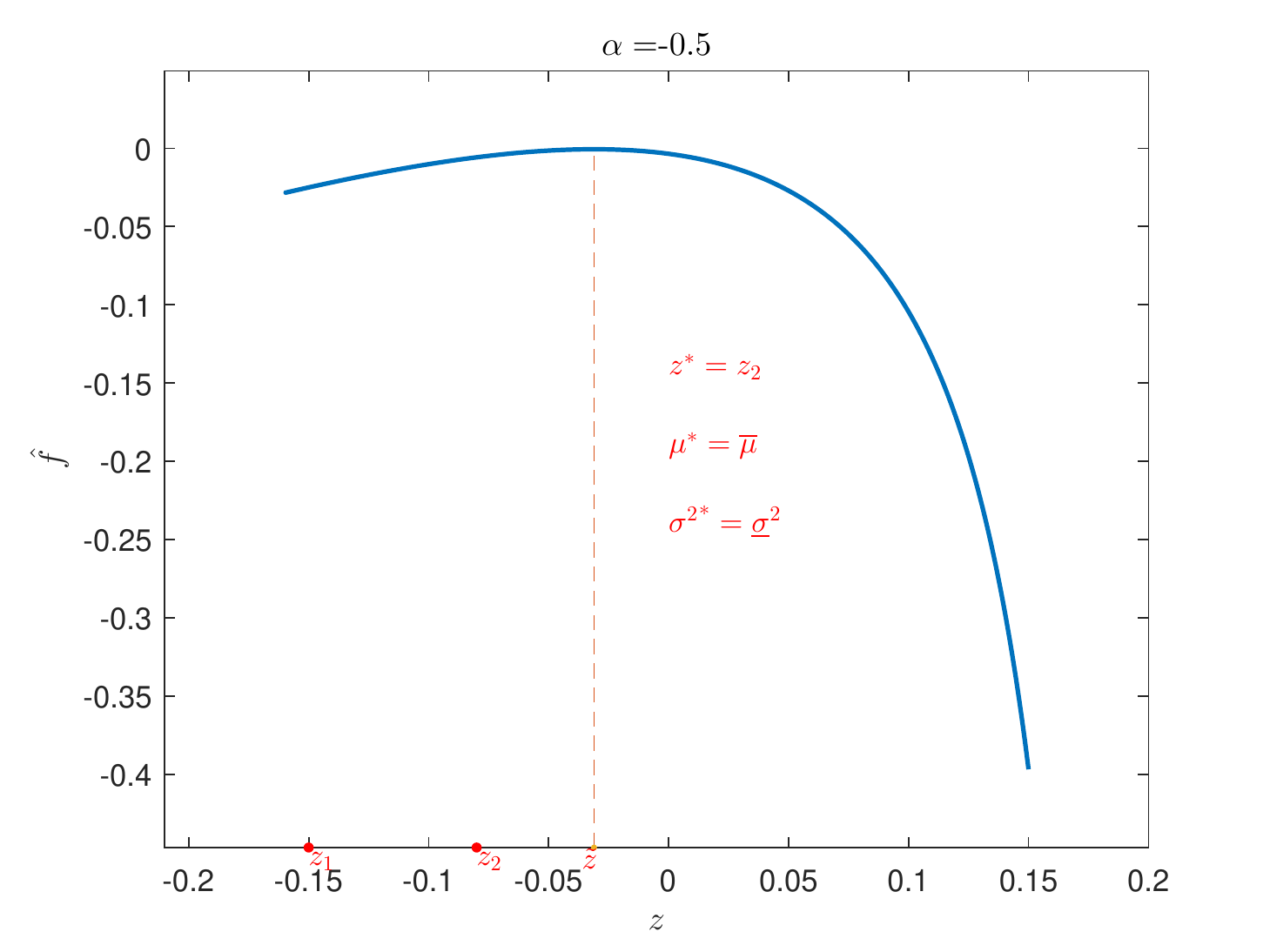}
  \end{minipage}
 }
   \subfigure[ ]{
  \begin{minipage}[t]{0.45\linewidth}
  \centering
  \includegraphics[scale=0.5]{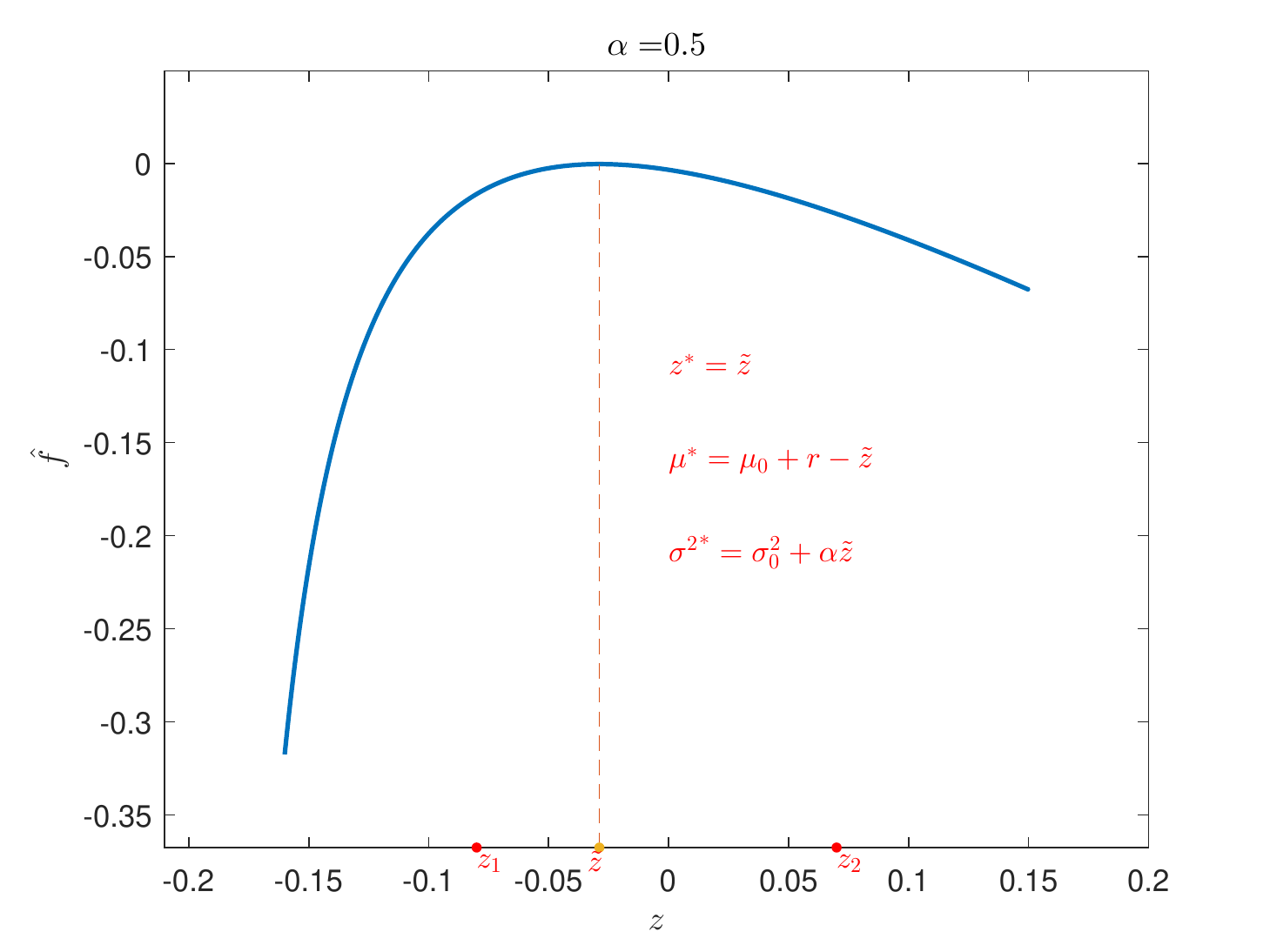}
  \end{minipage}
 }
\subfigure[ ]{
  \begin{minipage}[t]{0.45\linewidth}
  \centering
  \includegraphics[scale=0.5]{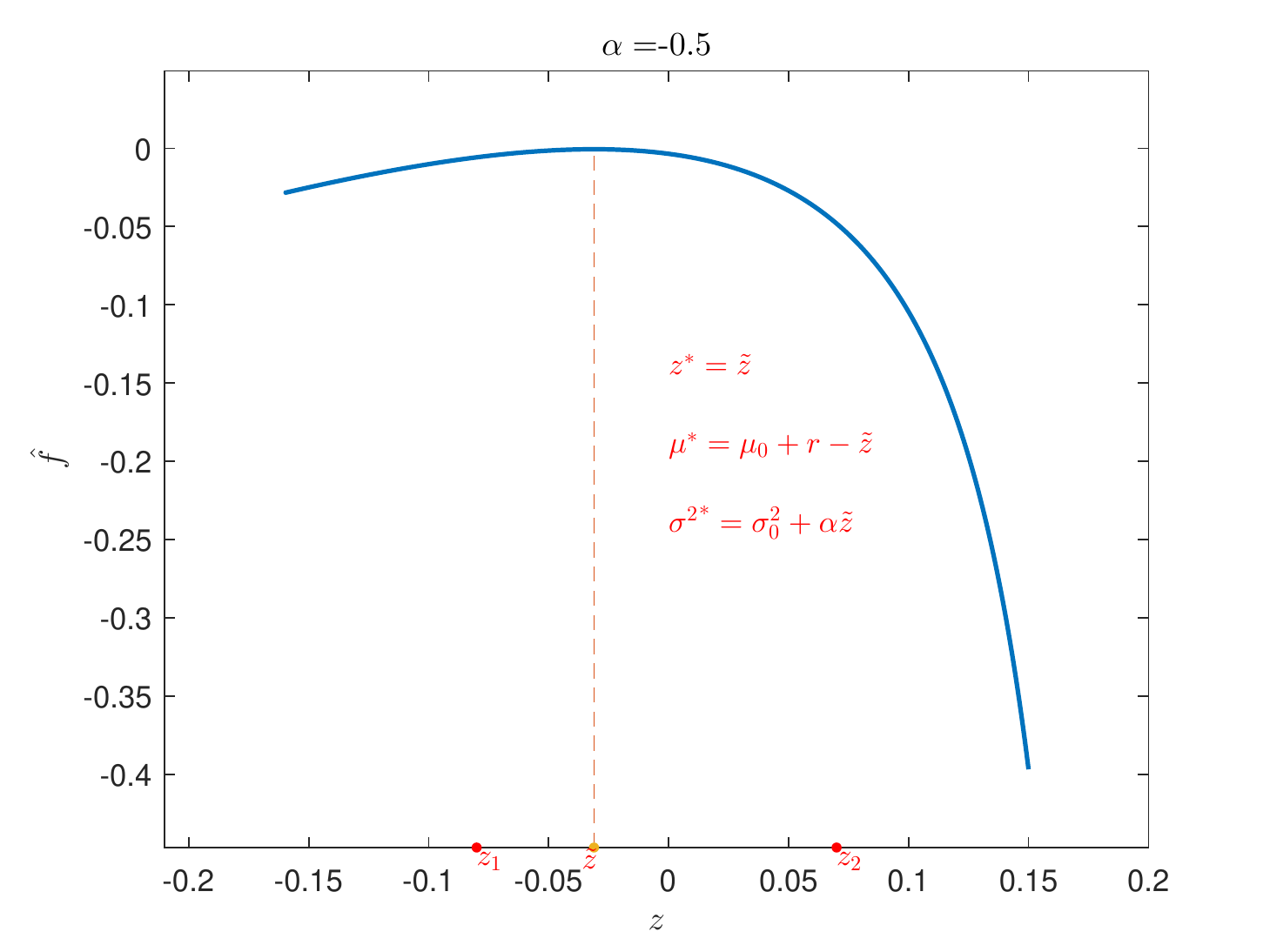}
  \end{minipage}
 }
    \subfigure[ ]{
  \begin{minipage}[t]{0.45\linewidth}
  \centering
  \includegraphics[scale=0.5]{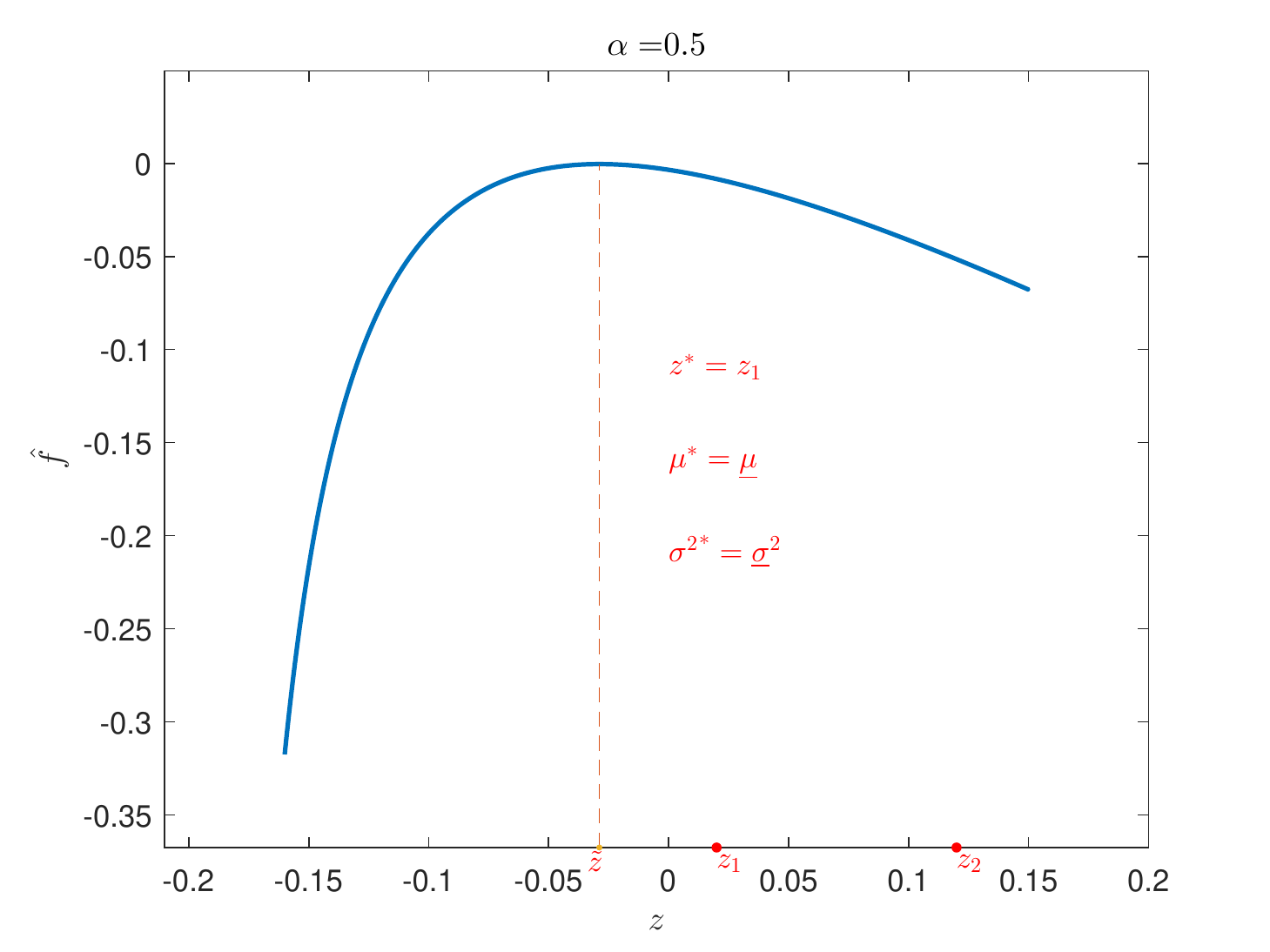}
  \end{minipage}
 }
\subfigure[ ]{
  \begin{minipage}[t]{0.45\linewidth}
  \centering
  \includegraphics[scale=0.5]{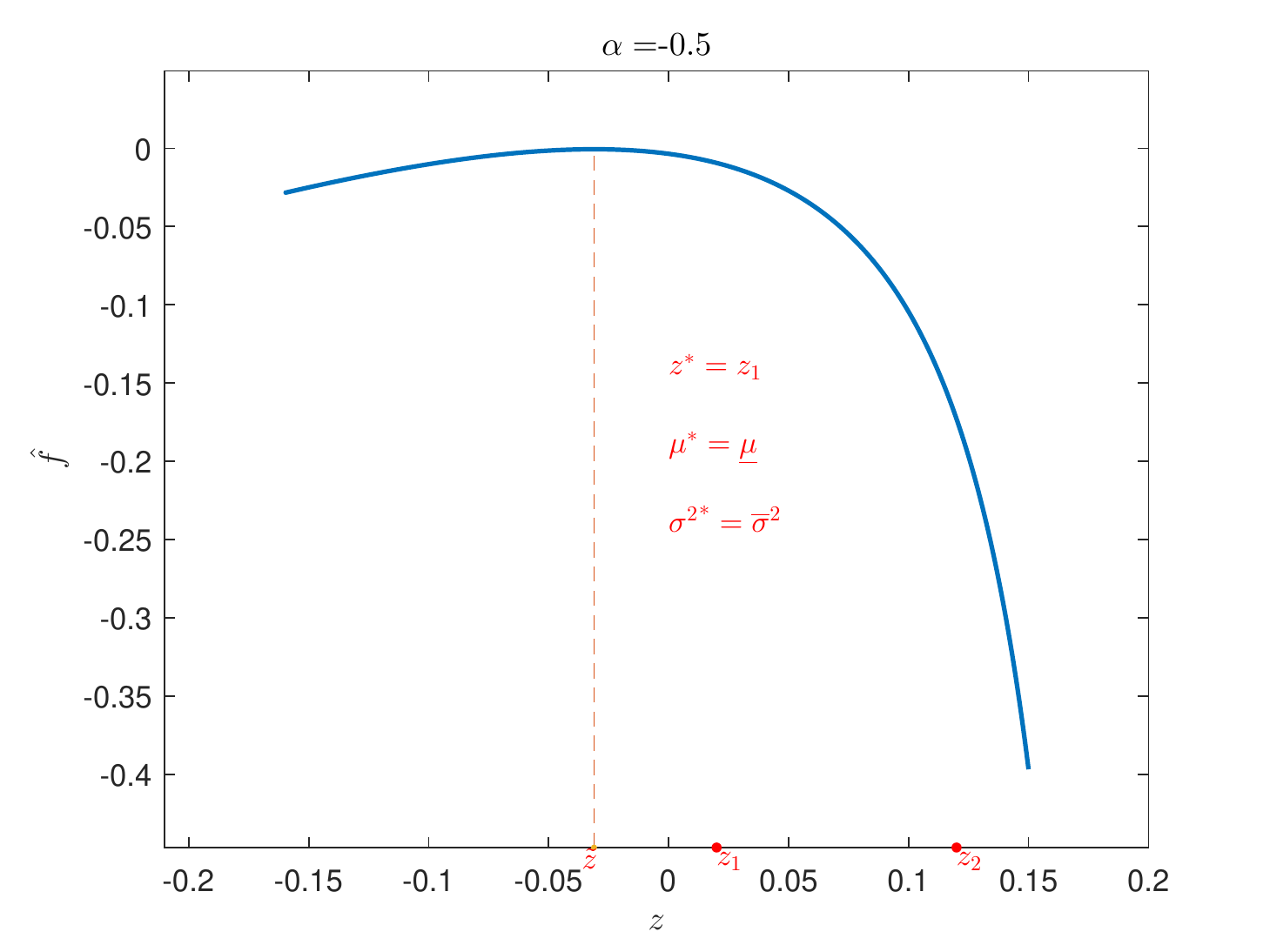}
  \end{minipage}
 }
 \caption{optimal vale $z^*$ in different settings.}\label{Fig1}
\end{figure}

\section{Conclusion}
\label{Conclusion}
The complicated market confronts an investor to ambiguity on the driving force of market randomness.
Such ambiguity may take the form of ambiguity on the mean return rate and volatility of an risky asset.
It may also affect the investor's preference when making investing decisions.
That is, the investor may be ambiguous not only on the characteristics of risky  assets but also  on her future preference.
  We took these two types of ambiguity into account, and
 investigated the horizon-unbiased investment problem.

We proposed the robust forward performance measure by accounting for an investor's ambiguity on the future preference, arising from the ambiguity on the driving force of market randomness.
This robust forward performance measure is then applied to formulate the investment problem.
The solution to such investment problem shows that
the sum of the market risk premium and the utility risk premium jointly determines the optimal trading direction.
If it is positive, the investor will take a long position on the risky asset.
Otherwise, she will take a short position.
This statement holds regardless of the specific form of the forward performance measures.

We then explored the  worst-case scenarios of the mean return and volatility when the initial utility is of the CRRA type.
Specifically, we investigate the  worst-case mean return and volatility in three settings:
ambiguity on mean return $\mu$,
 ambiguity on the volatility $\sigma$, and ambiguity on both mean return and volatility.
In the case of ambiguity on the mean return,
it is the total value of the market risk premium and the utility risk premium that determines an investor's conservative belief;
In the case of ambiguity on the volatility, it is the relative value of these two premiums that affects an investor's conservative belief.
In the case of ambiguity on both the mean return and   volatility,
the conservative belief may not be directly associated with these two premiums.
Note that, in all the three settings, the conservative beliefs may be the some intermediate values within their candidate value intervals, as well as boundaries.

In conclusion, the results provide explanations on the  mechanism of conservative belief selection and robust portfolio choice when an investor propagates her preference in accordance with the market evolution.

 \newpage
\renewcommand{\baselinestretch}{1}
\bibliographystyle{apalike2}

\end{document}